\documentclass{cup-pan}
\usepackage[utf8]{inputenc}
\usepackage{blindtext}

\title{Implementing partisan symmetry:\\
Problems and paradoxes}

\author[1]{Daryl DeFord}
\author[2]{Natasha Dhamankar}
\author[2]{Moon Duchin}
\author[3]{Varun Gupta}
\author[2]{Mackenzie McPike}
\author[2]{Gabe Schoenbach}
\author[2]{Ki Wan Sim}

\affil[1]{Washington State University}
\affil[2]{Tufts University/Voting Rights Data Institute}
\affil[3]{University of Pennsylvania}

\corrauthor{Moon Duchin}


\runningauthor{DeFord et al.} 

\usepackage{amsmath,graphicx,tikz,bbding}

\usepackage{amsthm}

\newcommand{\R}{\mathbb{R}}

\newcommand{\vbar}{\overline{v}}
\newcommand{\vmed}{v_{\rm med}}
\newcommand{\MM}{{\sf MM }}
\newcommand{\PG}{{\sf PG }}
\newcommand{\PB}{{\sf PB }}

\newcommand{\V}{{\sf v}}
\newcommand{\J}{{\sf j}}

\newcommand{\D}{ \deltaup}
\newcommand{\discrep}{\mathop{\rm discrep}}
\newcommand{\ceil}[1]{\lceil #1 \rceil}

\theoremstyle{plain}
\newtheorem{theorem}{Theorem}
\newtheorem{defn}[theorem]{Definition}
\newtheorem{example}[theorem]{Example}

 \addbibresource{references.bib}

\begin{document}
\maketitle

\begin{abstract}
We consider the measures of partisan symmetry proposed for practical use in the political science literature, as clarified and developed in \cite{KKR}.
Elementary mathematical manipulation shows the symmetry metrics to have surprising properties that call their meaningfulness into question.
To accompany the general analysis, we study measures of partisan symmetry with respect to recent voting patterns in Utah, Texas, and North Carolina, flagging problems in each case.  Taken together, these observations should raise major concerns about the available techniques for quantitative scores of partisan symmetry---including the mean-median score, the partisan bias score, and the more general "partisan symmetry standard"---as the decennial redistricting begins.

\keywords{Partisan symmetry, partisan metrics, redistricting, gerrymandering.}
\end{abstract}

\section{Introduction}
In the political science literature, there is a long legacy of work on partisan gerrymandering, or the act of drawing political boundary lines to secure extra seats for the party that controls the process.
One of the questions attracting the most attention has been to measure the degree of partisan advantage
secured by a particular choice of redistricting lines.  But to counteract partisan gaming requires
a baseline notion of partisan fairness---extra seats compared to what baseline?---that has proved elusive.
The family of fairness metrics with perhaps the longest pedigree is called {\em partisan symmetry}
scores \citep{Grofman-1983,King-Browning,KGGK-amicus,Grofman-King,King-amicus,Grofman-amicus}, 
which got a conceptual and empirical overview and a timely renewed endorsement in \citep{KKR}.
The partisan symmetry standard is premised on 
the intuitively appealing fairness notion that 
the share of representation awarded to one party with its share of the vote {\em should} also have been secured by the other party, had the vote shares been exchanged. For instance, if  Republicans achieve 40\% of the seats with 30\% of the vote, then it would be deemed fair for the Democrats to also achieve 40\% of the seats with 30\% of the vote.

At the heart of the symmetry ideal is a commitment
to the principle that half of the votes should secure
half of the seats.
There are several metrics in the symmetry family that derive 
their logic from this core axiom.
The {\em mean-median metric} is vote-denominated: it produces a signed number that is often described as measuring how far short of half of the votes a party can fall while still securing half the seats.
A similar metric, {\em partisan bias}, is seat-denominated. Given the same input, it is said to measure how much more than 
half of the seats will be secured
with half of the votes.
The ideal value of both of these scores is zero. 
These are two in a large family of  partisan metrics that can be described in terms of geometric symmetry of the "seats-votes curve."

The focus in the current work is to show that there
are serious obstructions to the practical implementation of symmetry standards.
This is of pressing current interest at the time of writing: we are in a period of intense public focus on redistricting reform and on the cusp of a new round of redistricting.  In 2018 alone, voter referenda led four states to pass constitutional amendments  (CO, MI, MO, OH), and another to write reforms
into state law (UT) in anticipation of 2021 redistricting.\footnote{At the time of writing, those measures stand in CO, MI, and OH. In Missouri, voters overwrote and partly erased the reform in the subsequent vote in 2020. In Utah, the state legislature repealed parts of the reform.}  In Utah, partisan symmetry was adopted as a criterion to be considered
by the new independent redistricting commission before plans would be approved.\footnote{From the voter initiative: "The Legislature and the Commission shall use judicial standards and the best available data and scientific and statistical methods, including measures of partisan symmetry, to assess whether a proposed redistricting plan abides by and conforms to the redistricting standards" that bar party favoritism.  The legislature later blocked this from taking effect.
Cf. Utah Code, Chapter 7 Title 20A, Chapter 19 Part 1, Para. 103, \url{https://le.utah.gov/xcode/Title20A/Chapter19/C20A-19\_2018110620181201.pdf}.}  We sound a note of caution here, showing that
the versions of these scores that are realistically useable
are eminently gameable by partisan actors and 
do not have reliable interpretations.  To be precise:  in each state we studied, the most extreme partisan outcomes for at least one political party are still achievable with a clean bill of health from the full suite of partisan symmetry scores.  Furthermore, the signed scores (like mean-median $\MM$ and partisan bias $\PB$) make frequent sign errors in terms of partisan advantage.

Utah itself gives strong evidence of the interpretation 
problems:  with respect to recent voting patterns, a
good symmetry score can only be achieved by a plan that secures a Republican congressional sweep; 
what's more, the popular symmetry scores described above flag all possible plans with any Democratic
representation as major {\em Republican} gerrymanders.

We argue that there is at present no workable framework to make good on the idea of partisan symmetry.  
A manageable symmetry standard requires a swing assumption of some kind because its logic is built on voting counterfactuals (namely, table-turning).  But this puts the most tenuous election modeling task, vote prediction, at the center of the methodology. 
And the symmetry framing requires that the metrics be insensitive to the fundamentally spatial nature of redistricting: there are many ways, not one way, for a vote pattern to shift by a given amount.\footnote{This is also the problem with  axioms for partisan fairness metrics such as McGhee’s "efficiency principle," which assumes that more votes will lead to more seats, absent manipulation.  In reality, the spatial distribution of a party's votes, and not just the number, controls the possibilities for representation.}\footnote{In the end, our conclusion is that methods centered on varying districting lines rather than varying votes should be preferred in the study of gerrymandering. This recommendation is consistent with a practice of perturbing recent vote patterns to get a view of the robustness of a gerrymander.  A robustness check or a trend projection can be a valuable element of a redistricting analysis while playing a secondary rather than primary role.}

The main contribution of the paper is a Characterization Theorem for partisan symmetry under linear swing that clarifies what is actually measured by the leading symmetry scores.  
Applying this characterization, we demonstrate that realistic conditions can easily lead to "paradoxes" where one party has an extreme advantage in seats but the other party is flagged as the beneficiary of the gerrymander.  We then use recent electoral data from three states to demonstrate the ease of gaming symmetry scores and the prevalence of these paradoxical labelings.  

\subsection{Literature review}
\subsubsection{Building the seats-votes curve with 
available data}
We consider an election in a state with $k$ districts and two major parties, Party A and Party B. 
A standard construction in political science is the "seats--votes curve," 
a plot representing the relationship of the vote share for Party A to the seat share for the same party.  Observed outcomes generate single points
 in $V$-$S$ space---for instance, $(.3,.4)$ represents an election where Party A got 30\% of the votes and 40\% of the seats---but various methods have been used to 
extend from a scatterplot to a curve, such as by  fitting a curve from a given class (linear or cubic, for instance) to observed data points.  
We will 
focus on the construction of seats-votes curves that 
is emphasized in \cite{KKR}:  linear uniform partisan swing.  
Beginning with a single 
data point derived from a districting plan and a vote pattern,
the vote share is varied by a uniform shift, so that the district vote shares
$(v_1,\dots,v_k)$ will shift to 
$(v_1+t,\dots,v_k+t)$.  
This generates a step function 
spanning from $(0,0)$ to $(1,1)$ in the $V$-$S$ plot, with a jump in seat share each time a district is pushed past 50\% vote share for Party A.  (See Figures~\ref{fig:sv-curve}-\ref{fig:oregon16} below for examples.)

Linear uniform partisan swing is the leading 
method proposed for use in implementation.
Katz--King--Rosenblatt explicitly 
make it Assumption 3 in their symmetry survey---and use it throughout the paper---noting that the curve-fitting alternative 
is more suited "for academic study... than for practical use"
in evaluation of plans. Katz et al. also mention Assumption 4, a stochastic generalization of uniform swing, as a preferred alternative to linear swing "whenever it makes a difference," and cite \cite{Brunell,Jackman} as examples that employ that model.  This would add many additional modeling decisions, so would be difficult to carry out authoritatively in a practical setting. 
Nevertheless, we will identify some differences in shifting to a stochastic UPS approach in notes below. In short, neither including low levels of noise nor employing a vote vector obtained by regression from several elections will compromise the main findings here.

In any event, it is simple uniform partisan swing that is prevalent in practical applications.  Grofman noted in 1983 that linear swing is preferred in practice to more 
sophisticated models \citep[n.14]{Grofman-1983}; it is touted as the standard technique in  \citep{Garand}; and 
it has been invoked as recently as 2019 in expert reports and testimony  \citep{Mattingly}.
Finally, Missouri voters actually wrote linear uniform partisan swing into their state constitution in 2018, requiring that an election index prepared by the state demographer  be subjected to a swing of $t=-.05,-.04,\dots,+.05$ to test a plan's responsiveness.\footnote{"[T]he electoral performance index shall be used to simulate elections in which the hypothetical statewide vote shifts by one percent, two percent, three percent, four percent, and five percent in favor of each party. The vote in each individual district shall be assumed to shift by the same amount as the statewide vote." Cf. Missouri State Constitution, Article III Section 3.}
Because the present analysis is designed to address the prospects for implementation, we therefore focus on the linear swing model.

\subsubsection{Deriving  symmetry scores from the seats-votes curve}
Given a seats-votes curve, many symmetry scores have been proposed; here, we  focus on the 
mean-median score $\MM$, the partisan bias score $\PB$, and the partisan Gini score $\PG$, which have all been considered for at least 35 years.  (Definitions are 
found in the next section.)
Grofman's  1983 survey paper \citep{Grofman-1983} lays out eight possible scores of asymmetry once a seats-votes curve has been set.  
His Measure 3 is vote-denominated bias, which would equal $\MM$ under linear uniform partisan swing; similarly
his Measure 4 corresponds to $\PB$, and 
Measure 7 introduces $\PG$.

Because the partisan Gini is 
defined as the area between the 
seats-votes curve and its reflection over the center (seen in the shaded regions in Figs~\ref{fig:sv-curve}-\ref{fig:oregon16}),
it is easily seen to "control" all the other 
possible symmetry scores:
when $\PG=0$, its ideal value, all partisan symmetry metrics also take their ideal values, including $\MM$ and $\PB$. This agrees with Katz--King--Rosenblatt \citep[Def 1]{KKR}, where the coincidence of the curve and its reflection, i.e., $\PG=0$, is called the "partisan symmetry standard."
In the current work, our Theorem~\ref{thm:equiv} gives 
precise necessary and sufficient conditions for the partisan symmetry standard to be satisfied.

The literature invoking $\MM$ and $\PB$ as measures of bias is too large to survey comprehensively. We note that the  interpretation of median-minus-mean as quantifying (signed) Party A advantage is fairly standard in the journal literature, such as: "The median is 53 and the mean is 55; thus, the bias runs two points against Party A (i.e., $53-55=-2$)"  \citep{MB}.  The connection to the seats-votes curve is also standard:
$\MM$ "essentially slices the S/V graph horizontally at the S = 50\% level and obtains the deviation of the vote from 50\%" \citep[p351]{Nagle}.\footnote{There is even more work centered on $\PB$ 
(notably \cite{King-Browning}), but it is more rarely used in conjunction with linear swing, since that assumption makes its values move in large jumps.}

We briefly note the impact of introducing stochasticity  on the analysis below, we note that modifying the seats--votes curve $\gamma$ by adding  noising terms with mean zero will change the precision of our findings but not the basic structure, replacing exact equalities in the Characterization Theorem with approximate equalities.  In particular, this does not impact the prevalence of "paradoxes," for two reasons.  First, when the curve $\gamma$ passes far from $(V,S)=(.5,.5)$, perturbations to $\gamma$ will not move it past the center point, which would be needed to change the sign of $\MM$ and $\PB$ (as explained below in \S\ref{sec:defs}).  Second, the standard mean-median score is simply calculated as the difference of the mean vote share by district and the median \citep[p173]{KKR}, and thus relies on no swing assumption at all!  Abandoning linear swing therefore does not fix the problems with the mean-median score, but only breaks its relationship to the seats--votes curve.

\subsubsection{Applying symmetry scores in practice}
The current work is designed to evaluate the techniques proposed by leading practitioners for practical use.
Political scientists and their collaborators have advanced these scores in amicus briefs spanning from 
{\em LULAC v. Perry} (2006)  \citep{KGGK-amicus} to {\em Whitford v. Gill}
(2018) \citep{King-amicus} to 
{\em Rucho v. Common Cause} (2019) \citep{Grofman-amicus}. 
The scores have been claimed to be
"reliable and difficult to manipulate" and authors 
have argued that while 
"Symmetry tests should deploy actual election outcomes"
(as we do here), they will nonetheless
"measure opportunity,"
i.e., give information about future performance \citep[p17,24]{King-amicus}.  That assertion is drawn from an {\em amicus} brief in the Whitford case explicitly 
proposing mean-median as a concrete choice of score for
this task.
As laid out earlier in the influential LULAC brief,
\begin{quote}“Models applying the symmetry standard are by their nature predictive, just as the legislators themselves are predicting the potential impact of the map on partisan representation. The symmetry standard and the resulting measures of partisan bias, however, do not require forecasts of a particular voting outcome. Rather, by examining all the relevant data and the potential seat divisions that would occur for particular vote divisions, it compares the potential scenarios and determines the partisan bias of a map, separating out other potentially confounding factors. Importantly, those drawing the map have access to the same data used to evaluate it, and no data is required other than what is in the public domain" \citep[p11]{KGGK-amicus}. 
\end{quote}
This paper takes up precisely this modeling task 
in the manner explicitly proposed by its authors.

\subsection{Premises and caveats}

\subsubsection{What is partisan gerrymandering?}
To assess the success of partisan symmetry metrics at their task of identifying partisan gerrymandering, we should be clear about first principles.  
First, we agree on the definition from \cite{KKR}:  "Partisan gerrymanderers use their knowledge of voter preferences and their ability to draw favorable redistricting plans to maximize their party's seat share."  That is, the express intent of a partisan gerrymander is to secure for their party as many seats as possible under the constraints of voter geography and the other rules of redistricting.

This means that a (successful) gerrymander in favor of Party A is a districting plan that obtains an extreme Party A seat share.  In the public perception, that will usually be assessed by comparing the seat share to the vote share, undergirded by an intuition that equates fairness with {\em proportionality}. But proportionality is not the neutral tendency of redistricting and in some cases it may be literally impossible to secure \citep{VRDI-MA}.
For the better part of a century, political scientists have investigated this neutral tendency by appealing to constructions like seats--votes curves and cube laws.  
This literature has been severely limited by its inattention (with a few notable exceptions) to spatial factors, i.e., to the geography of the vote distribution.\footnote{The absence of vote geography in the bulk of redistricting research is noted, for instance, by \cite{Johnston2002,Rodden,CalvoRodden}.}  
A powerful alternative has recently emerged through so-called {\em ensemble methods}: Markov chain algorithms (for example) can now build samples of alternative districting plans, holding a vote distribution fixed.  Though we make use of ensembles of alternative plans below, the crux of this paper does not require the reader to commit to this or any particular choice of non-gerrymandered baseline.  On any common view of the baseline, from proportionality to cube law to outlier analysis from an ensemble, the standard definition of partisan gerrymandering entails views like these:
\begin{itemize}
    \item Circa 2016, the voter preferences in North Carolina were fairly even between Democratic and Republican candidates for statewide office.  Both algorithmic techniques and human mapmakers can easily draw plans ranging from 7 to 10 districts with a Republican majority in the 13-member Congressional plan.  
    In this context, a successful Republican gerrymander would secure a 10R-3D outcome, or an even more extreme outcome if possible.
    \item In Utah, voting patterns in this period tend to favor Republican over Democratic candidates by a roughly 70-30 split.  This is tilted enough that a 4R-0D  Congressional plan is in some sense typical or expected, and need not be viewed as a gerrymander.  However, it would be an error to label a 3R-1D outcome as a Republican-favoring gerrymander.
\end{itemize}
We will treat these as premises in the treatment below.

\subsubsection{Ensembles, not estimators}
In this paper, the algorithmically generated plans are not offered
as a statistical experiment and come with no probabilistic claims, but mainly provide {\em existence proofs} to illustrate how readily gameable partisan symmetry
standards will be for those engaged in redistricting.\footnote{This is exactly the use of ensemble methods that is endorsed in \citep[p176]{KKR} as productive and compelling:  a demonstration of possibility and impossibility.}
The methods also produce many thousands of examples
of plans that are paradoxical in the sense developed in this paper, where signed partisan symmetry metrics identify the 
wrong party as the gerrymanderers.

The algorithm used here (described in \S\ref{sec:methods}) builds a sample of plans that are plausible by the lights of traditional districting principles:  they are population-balanced, contiguous, and relatively compact, using whole precincts as the building blocks. There are techniques to layer in other criteria in addition to these in a state-specific way to set up a more thorough outlier analysis, but that is not needed for this application.
(See for instance \cite{VA-criteria,Compet,VRA-ensembles}.)
Nothing here, or in the broader logic of ensemble analysis, assumes that line-drawers are random agents.

\subsubsection{The competitiveness caveat}

Writing after the {\em LULAC v. Perry} decision in 2007, Grofman and King offer this key caveat: "[W]e are not proposing to apply this methodology in every situation, but only in {\em potentially} competitive jurisdictions, where the consequences of gerrymandering might be especially onerous in thwarting the will of the majority” (\cite{Grofman-King}, their emphasis). In the following paragraph they suggest that "reasonably competitive” settings could be those where each party receives 40-60\% of vote share. This is a sizeable limitation on the scope of the symmetry approach:  only about half of states have a recent U.S. Senate voting pattern in this range, for example.\footnote{See for instance
\url{https://github.com/gerrymandr/party-tilt}.}
Two of the three cases presented below (North Carolina and Texas) are in this reasonably competitive zone; the third is Utah, where a partisan symmetry standard was recently enacted in law.  We could not find any record of political scientists speaking out against Utah's adoption of this measure while it was on the ballot in 2018.

The more recent \cite{KKR} mainly places its competitiveness caveat in the article supplement, though it is obliquely referenced in Assumption 2, which requires that there is a sufficiently large range of "possible values" for vote outcomes.
Even there, the caveat is hedged: "Although Assumption 2 is defined in terms of possible electoral outcomes, those that are exceedingly unlikely, such as Washington DC voting overwhelming[ly] Republican, do not violate this assumption but may generate model dependence in estimation."  On our reading, the authors do not rule out the use of partisan symmetry metrics even on states with an extreme partisan lean.  

In any case, the analysis presented here, which shows that the partisan symmetry standard devolves to a simple numerical test, extends to competitive as well as uncompetitive situations.

\section{A mathematical characterization
of the Partisan Symmetry Standard}\label{sec:defs}
We begin with definitions and notation needed to 
state  Theorem~\ref{thm:equiv}, which characterizes when 
$\PG=0$ (the Partisan Symmetry Standard from \cite{KKR}).
We describe the vote outcome in the election using an ordered tuple (i.e., a vector) whose coordinates record the Party A share of the two-party vote in each of the $k$ districts as follows: $\V=(v_1, \dots, v_k),$ where $0 \leq v_1 \le \ldots \le v_k \leq 1$.  
Let the mean district vote share for Party A be denoted $\vbar = \frac 1k \sum v_i$ and the  median district vote share, $\vmed$, be the median 
of the $\{v_i\}$, which equals
$v_{(k+1)/2}$ if $k$ is odd and 
$\frac 12 (v_{k/2}+v_{(k/2)+1})$ if
$k$ is even because of the convention that coordinates are in non-decreasing order.
We note that $\vbar$ is not 
necessarily the same as the statewide share for Party A except in the idealized scenario that the districts
have equal numbers of votes cast (i.e., equal turnout).

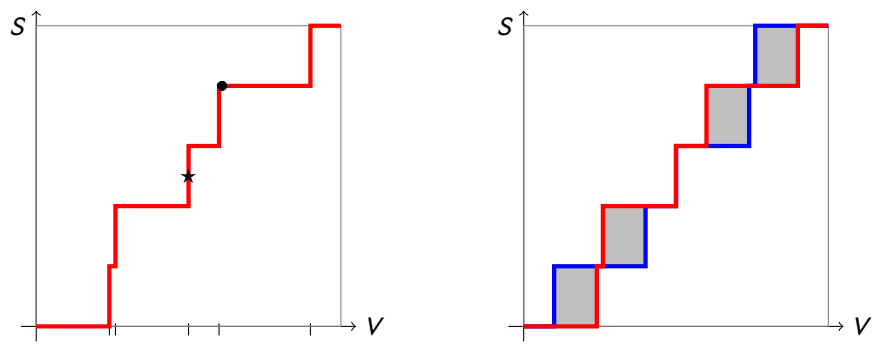
\begin{figure}[ht]
\centering
\begin{tikzpicture}[scale=4]
\def\ja{.24}
\def\jb{.26}
\def\jc{.5}
\def\jd{.6}
\def\je{.9}
\begin{scope}
\draw [->] (-.05,0)--(1.05,0);
\node at (1.05,0) [right] {$V$};
\draw [->] (0,-.05)--(0,1.05);
\node at (0,1) [left] {$S$};
\foreach \x in {\ja,\jb,\jc,\jd,\je}
{\draw (\x,-.03)--(\x,.01);}
\draw [gray,thin] (0,0) rectangle (1,1);
\draw [ultra thick,red] (0,0)--(\ja,0)--(\ja,1/5)--(\jb,1/5)
--(\jb,2/5)--(\jc,2/5)--(\jc,3/5)--(\jd,3/5)--(\jd,4/5)--
(\je,4/5)--(\je,1)--(1,1);
\filldraw (.61,4/5) circle (.015);
\node at (.5,.5) {$\star$};
\end{scope}
\begin{scope}[xshift=1.6cm]
\draw [->] (-.05,0)--(1.05,0);
\node at (1.05,0) [right] {$V$};
\draw [->] (0,-.05)--(0,1.05);
\node at (0,1) [left] {$S$};
\draw [gray,thin] (0,0) rectangle (1,1);
\filldraw [gray,fill opacity=.5,line width=0]
(0,0)--(\ja,0)--(\ja,1/5)--(\jb,1/5)
--(\jb,2/5)--(\jc,2/5)--(\jc,3/5)--(\jd,3/5)--(\jd,4/5)--
(\je,4/5)--(\je,1)--(1,1)--
(1-\ja,1)--(1-\ja,4/5)--(1-\jb,4/5)
--(1-\jb,3/5)--(1-\jc,3/5)--
(1-\jc,2/5)--(1-\jd,2/5)--(1-\jd,1/5)
--(1-\je,1/5)--(1-\je,0)--cycle;
\draw [ultra thick,blue,rotate=180,xshift=-1cm,yshift=-1cm] (0,0)--(\ja,0)--(\ja,1/5)--(\jb,1/5)
--(\jb,2/5)--(\jc,2/5)--(\jc,3/5)--(\jd,3/5)--(\jd,4/5)--
(\je,4/5)--(\je,1)--(1,1);
\draw [ultra thick,red] (0,0)--(\ja,0)--(\ja,1/5)--(\jb,1/5)
--(\jb,2/5)--(\jc,2/5)--(\jc,3/5)--(\jd,3/5)--(\jd,4/5)--
(\je,4/5)--(\je,1)--(1,1);
\end{scope}
\end{tikzpicture}
\caption{Red: The seats-votes curve 
$\gamma$ generated by  vote shares
 $\V=(.21,.51,.61,.85,.87)$ 
under uniform partisan swing.
This gives $\vbar=.61$ as the average vote share across the  districts.  The {\em jump points}, where an additional seat changes hands, are marked on the $V$ axis.
Blue: the reflection of $\gamma$
about the center point $\star$.
Since $\MM$ is the horizontal displacement from $\star$
to a point on $\gamma$,
this hypothetical election
has a perfect $\MM=0$ score, but it is not very symmetric overall, with $\PG=.112$, seen as the area of the shaded
region between $\gamma$ and its reflection. Because the step function jumps at $V=.5$, it is not clear how $\PB$ is defined in this case.}
\label{fig:sv-curve}
\end{figure}

The number
of districts in which 
Party A has more votes than 
Party B in an election with
vote shares $\V$ is the seat outcome,
$\#\{i:v_i>\frac 12\}$.
This induces a seats-votes function $\gamma=\gamma_\V$ defined as 
$\gamma(\vbar+t)=\#\{i: v_i+t>\frac 12\}/k$, which we can interpret as the share of 
districts won by Party A in 
the counterfactual that 
an amount $t$ was added to A's
observed vote share in every district.  
Varying $t$ to range over the one-parameter family of 
vote vectors 
$(v_1+t,\dots,v_k+t)$
is known as (linear)
{\em uniform partisan swing}.
The curve $\gamma$, treated
as a function $[0,1]\to [0,1]$, has been regarded 
as measuring how a fixed districting plan would behave
if the level of vote for Party A were to swing up or down.
Below, we will refer to the function and its
graph interchangeably, and we will
call it the {\em seats-votes curve}
associated to the vote share vector $\V$.
We begin with several scores based on $\V$ and $\gamma$.

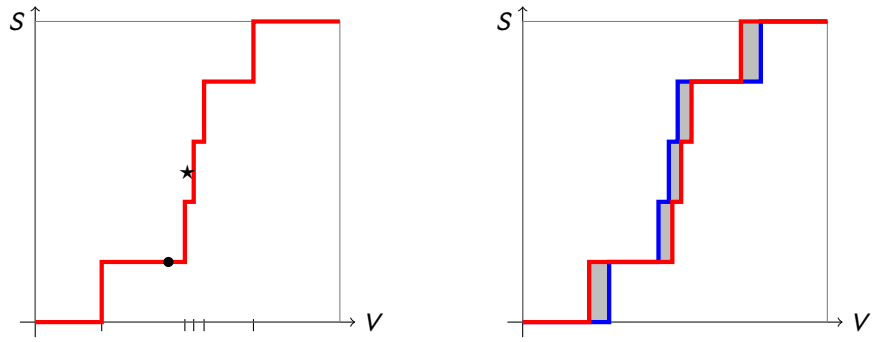
\begin{figure}[ht]
\centering
\begin{tikzpicture}[scale=4]
\def\ja{.2182}
\def\jb{.4912}
\def\jc{.5202}
\def\jd{.5542}
\def\je{.7162}
\begin{scope}
\draw [->] (-.05,0)--(1.05,0);
\node at (1.05,0) [right] {$V$};
\draw [->] (0,-.05)--(0,1.05);
\node at (0,1) [left] {$S$};
\foreach \x in {\ja,\jb,\jc,\jd,\je}
{\draw (\x,-.03)--(\x,.01);}
\draw [gray,thin] (0,0) rectangle (1,1);
\draw [ultra thick,red] (0,0)--(\ja,0)--(\ja,1/5)--(\jb,1/5)
--(\jb,2/5)--(\jc,2/5)--(\jc,3/5)--(\jd,3/5)--(\jd,4/5)--
(\je,4/5)--(\je,1)--(1,1);
\filldraw [black] (0.4372,1/5)
circle (.015);
\node at  (0.5,.5) {$\star$};

\end{scope}
\begin{scope}[xshift=1.6cm]
\draw [->] (-.05,0)--(1.05,0);
\node at (1.05,0) [right] {$V$};
\draw [->] (0,-.05)--(0,1.05);
\node at (0,1) [left] {$S$};
\draw [gray,thin] (0,0) rectangle (1,1);
\filldraw [gray,fill opacity=.5,line width=0]
(0,0)--(\ja,0)--(\ja,1/5)--(\jb,1/5)
--(\jb,2/5)--(\jc,2/5)--(\jc,3/5)--(\jd,3/5)--(\jd,4/5)--
(\je,4/5)--(\je,1)--(1,1)--
(1-\ja,1)--(1-\ja,4/5)--(1-\jb,4/5)
--(1-\jb,3/5)--(1-\jc,3/5)--
(1-\jc,2/5)--(1-\jd,2/5)--(1-\jd,1/5)
--(1-\je,1/5)--(1-\je,0)--cycle;
\draw [ultra thick,blue,rotate=180,xshift=-1cm,yshift=-1cm] (0,0)--(\ja,0)--(\ja,1/5)--(\jb,1/5)
--(\jb,2/5)--(\jc,2/5)--(\jc,3/5)--(\jd,3/5)--(\jd,4/5)--
(\je,4/5)--(\je,1)--(1,1);
\draw [ultra thick,red] (0,0)--(\ja,0)--(\ja,1/5)--(\jb,1/5)
--(\jb,2/5)--(\jc,2/5)--(\jc,3/5)--(\jd,3/5)--(\jd,4/5)--
(\je,4/5)--(\je,1)--(1,1);

\end{scope}
\end{tikzpicture}
\caption{The seats-votes curve 
$\gamma$ generated by the vote share
vector $\V=(.221,.383,.417,.446,.719)$,
which was the observed outcome 
in the 2016 Congressional races 
in Oregon from the Republican point of view.  
This gives
a mean of $\vbar=0.4372$, and earned
Republicans 1 seat out of 5.
The blue curve is the reflection
of $\gamma$ about the center,
so it shows seats at each vote share from the 
Democratic point of view.  This could be 
regarded as a situation with reasonably good symmetry, since the red and blue curves are close.
Its scores are $\PG=.05248$, $\MM=-.0202$, and $\PB=-.1$.
The sign of the latter two scores  is thought to indicate a Democratic advantage.}
\label{fig:oregon16}
\end{figure}

\newpage
\begin{defn}[Partisan symmetry scores]
The {\bf partisan Gini} score $\PG(\V)$ is the area between the 
seats-votes curve $\gamma_\V$ and its reflection over the center point 
$\star=(\frac 12, \frac 12)$.
$$\PG(\V)=\int_0^1 \bigl|\gamma(x)-\gamma(1-x)+1\bigr|\ dx.$$
The {\bf mean-median} score is $\MM(\V)=\vmed-\vbar$.
The {\bf partisan bias} score is
$\PB(\V)=\gamma(\frac 12)-\frac 12$.
\end{defn}

These scores can all be
related to the shape of the seats-votes curve $\gamma$ 
(see Figures~\ref{fig:sv-curve},\ref{fig:oregon16}).
Partisan Gini measures the failure of $\gamma$
to be symmetric about 
the center point $\star=(\frac 12,\frac 12)$,
in the sense that it is always non-negative, and it equals 
zero if and only
if $\gamma$ equals its reflection.
Mean-median score is the horizontal 
displacement from $\star$ to a point on $\gamma$,\footnote{To see this, plug in $t=1/2-\MM-\vbar=1/2-\vmed$ to 
deduce that $(1/2-\MM,1/2)$ is on $\gamma$---note that this connection from arithmetic to geometry is only exact when $\gamma$ has been constructed with linear swing, which is the standing assumption in this paper.} which is why it is votes-denominated (vote-share being the variable on the $x$-axis).
Similarly, partisan bias is the 
vertical displacement from $\star$ to a point
on $\gamma$, and is therefore seats-denominated.
(We note that $(\frac 12,\gamma(\frac 12))$ is a well-defined
point unless there is a jump precisely at $1/2$,
which occurs if some $v_i=\vbar$ on the nose---this
is shown in Figure~\ref{fig:sv-curve} but should not happen with real-world data.)
Below, we will focus on $\MM$ instead of $\PB$, but 
we note that $\MM>0\implies \PB\ge 0$ because of the geometric interpretation: if $\gamma$ passes to the 
left of $\star$ and is nondecreasing, then it must
pass through or above $\star$.

We can see that the  curve $\gamma$, and consequently the partisan Gini score, is exactly characterized by the points at which Party A has added enough vote share to secure the majority in an additional district. For the following analysis, it will be useful to characterize this curve in terms of the $\V$ data.

\begin{defn}[Gaps and jumps]
The {\bf gaps} in a vote share
vector $\V$ can be written in a 
gap vector
$$\D = (\delta_1, \delta_2, \ldots, \delta_{k-1})
    = (v_2 - v_1, \ \ v_3 - v_2, \ \ \ldots, \ \  v_k-v_{k-1}).$$
    
The {\bf jump points} for vote share
vector $\V$ are the values of $\vbar +t$
such that some $v_i+t=\frac 12$.
We have 
$$    t_1 := \frac{1}{2} - v_k \ , \qquad
    t_2 := \frac{1}{2} - v_{k-1}  \ ,
    \qquad
    \dots 
    \qquad 
    t_k := \frac{1}{2} - v_1$$
as the times corresponding to these jumps,
so we can denote the jumps as 
$j_i=\frac 12 + \vbar - v_{k+1-i}$,
and the jump vector as $\J=(j_1,\dots,j_k)$.
\end{defn}
By definition of $\gamma$,
these jump points $j_i$, marked in the figures, 
are exactly the $x$-axis locations (i.e., the $V$ values)
at which $\gamma$ jumps
from $(i-1)/k$ to $i/k$.\footnote{Warning to the reader:  if you try to draw your own 
examples to test some of these results, be aware that not just any step function can be generated by a vote vector.  The jump points must satisfy
$\sum j_i = k/2$, which follows directly from summing $j_i=\frac 12+\vbar-v_{k+1-i}$.}
 
With this notation, we can rephrase and relate the various partisan symmetry scores.  For instance, the center-most
rectangle(s) formed between $\gamma$ and its reflection have height $2\PB$ and width $2\MM$, which lets us relate the scores.
For small $k$, these relationships reduce to extremely simple 
expressions:  $\PG=\frac 43 |\MM|$ when $k=3$, and 
$\PG=2|\MM|$ when $k=4$.  (Proved in the Supplement.)
 
For any number of districts, we obtain a  clean characterization of 
precisely when vote shares by district  satisfy the  
Partisan Symmetry Standard \citep[Def 1]{KKR}.

\begin{theorem}[Partisan Symmetry Characterization]\label{thm:equiv}
    Given $k$ districts with vote shares $\V$, jump vector $\J$, and gap vector $\D$, the following are equivalent:
    \begin{align}
    \setcounter{equation}{0}
        &\PG(\V) = 0 \tag{Partisan Symmetry Standard}\\
        &j_i + j_{k+1-i} - 1 = 0 \qquad  \forall  i  \tag{jumps}\\
        &\frac 12 \left(v_i + v_{k+1-i}\right) = \vbar \qquad \forall  i  \tag{mean vote}\\
                &\frac 12 \left(v_i + v_{k+1-i}\right) = \vmed \qquad \forall  i  \tag{median vote} \\
        &\delta_i = \delta_{k-i} \qquad \forall  i \tag{gaps}
    \end{align}
\end{theorem}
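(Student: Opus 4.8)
The plan is to treat the four ``combinatorial'' conditions (jumps), (mean vote), (median vote), and (gaps) as linked by elementary algebra, and to locate the real content in the equivalence between these and the geometric condition $\PG(\V)=0$. Concretely, I would establish the tree of equivalences (Partisan Symmetry Standard) $\iff$ (jumps) $\iff$ (mean vote) $\iff$ (gaps), together with (mean vote) $\iff$ (median vote); this connects all five statements.

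First I would dispatch (jumps) $\iff$ (mean vote) by direct substitution. Using $j_i=\tfrac12+\vbar-v_{k+1-i}$, one computes
$$ j_i + j_{k+1-i} - 1 = 2\vbar - \bigl(v_i + v_{k+1-i}\bigr), $$
so the jump identity for a given $i$ holds exactly when $\tfrac12(v_i+v_{k+1-i})=\vbar$. Next, for (mean vote) $\iff$ (gaps) I would set $s_i := v_i + v_{k+1-i}$ and observe the telescoping identity $s_{i+1}-s_i = \delta_i - \delta_{k-i}$; hence $s_i$ is constant in $i$ (i.e.\ $\V$ is symmetric about its center) precisely when the gap vector is a palindrome. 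For (mean vote) $\iff$ (median vote), the key remark is that each of these says $\V$ is symmetric about a single center $c$ (namely $\vbar$, resp.\ $\vmed$); summing $\tfrac12(v_i+v_{k+1-i})=c$ over all $i$ forces $c=\vbar$, while the middle index forces $c=\vmed$, so symmetry about either center already entails $\vbar=\vmed$ and the two conditions coincide.

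The crux is the equivalence (Partisan Symmetry Standard) $\iff$ (jumps). Since the integrand defining $\PG$ is nonnegative, $\PG(\V)=0$ holds exactly when $\gamma$ coincides (almost everywhere) with its reflection $\tilde\gamma(x)=1-\gamma(1-x)$ through $\star$. Both $\gamma$ and $\tilde\gamma$ are nondecreasing step functions running from $(0,0)$ to $(1,1)$ with all jumps of height $1/k$: the curve $\gamma$ jumps at the sorted points $j_1\le\cdots\le j_k$, while $\tilde\gamma$ jumps at the sorted points $1-j_k\le\cdots\le 1-j_1$. Two such step functions agree as soon as their ordered lists of jump locations agree, so $\gamma=\tilde\gamma$ is equivalent to $j_i = 1-j_{k+1-i}$ for every $i$, which is exactly (jumps).

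I expect the main obstacle to be making this last step fully rigorous in the presence of step-function degeneracies: agreement almost everywhere must be upgraded to equality of the jump-location data, and one must handle ties among the $v_i$ (which collapse several $j_i$ into a single jump of height exceeding $1/k$) as well as the borderline case of a jump exactly at $x=\tfrac12$. I would phrase the matching argument in terms of the multiset of jump locations, counted with multiplicity, so that coincident jumps are treated uniformly, reducing the geometric equivalence to the bookkeeping identity already verified above.
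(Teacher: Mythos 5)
Your proposal is correct, and the web of algebraic equivalences among (jumps), (mean vote), (median vote), and (gaps) is handled essentially as in the paper: the substitution $j_i+j_{k+1-i}-1=2\vbar-(v_i+v_{k+1-i})$, the averaging/middle-index argument identifying the center of symmetry with both $\vbar$ and $\vmed$, and the telescoping link $s_{i+1}-s_i=\delta_i-\delta_{k-i}$ (the paper phrases the gap step more informally, but it is the same observation). Where you diverge is the crux step tying $\PG=0$ to the jump condition. The paper does not argue by ``$\PG=0$ iff $\gamma$ equals its reflection a.e.\ iff the jump multisets match''; instead it decomposes the region between $\gamma$ and its reflection into \emph{horizontal} rectangles, the $i$th of height $1/k$ and width $\bigl|j_i+j_{k+1-i}-1\bigr|$, yielding the exact identity
$$\PG=\frac 1k\sum_{i=1}^k \bigl|j_i+j_{k+1-i}-1\bigr|,$$
after which the equivalence is immediate because a sum of nonnegative terms vanishes iff every term does. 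This buys two things your route does not: the degeneracies you worry about (coincident jumps, ties among the $v_i$, a jump at $x=\tfrac 12$, and upgrading almost-everywhere agreement to equality of jump data) never arise, since the formula is an unconditional identity; and the quantitative expression for $\PG$ is reused later in the paper to prove the sharp lower bounds $\PG\ge\frac 4k|\MM|$ ($k$ odd) and $\PG\ge\frac 8k|\MM|$ ($k$ even). Your multiset-matching argument is sound---two nondecreasing $1/k$-step functions agree a.e.\ exactly when their sorted jump lists coincide termwise---but it only characterizes the vanishing locus of $\PG$ rather than computing $\PG$ itself, so if you carry it out you should still expect to need the horizontal-slicing identity (or something equivalent) for the companion results.
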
 
The proof is included in the Supplement.
Note also that the theorem statement makes it clear that $\PG=0\implies \MM=0$ by comparing the third equality to the fourth, which fits with the earlier observation that partisan Gini  "controls" the other scores.

Theorem~\ref{thm:equiv} asserts that the partisan symmetry standard under linear swing is nothing 
but the requirement that the vote shares by 
district are distributed on the number line 
in a symmetric way.\footnote{Consider replacing simple linear swing by the modeling approach frequently used in stochastic uniform partisan swing, such as in the JudgeIt software package.  In this case,  rather than using a single election’s vote-share-by-district, a linear swing is applied to a different vote vector that is obtained by regression from several elections.  In this case, the Characterization Theorem still holds, applied to this inferred vote vector instead of the observed vote vector.}
In particular, this tells you at a glance that 
an election with vote shares $(.37,.47,.57,.67)$ in its districts 
rates as perfectly partisan-symmetric, while one 
with vote shares $(.37,.47,.57,.60)$ falls short.

\begin{figure}[ht]
\begin{tikzpicture}[scale=6]
\begin{scope}
\draw (0,0)--(1,0); 
\draw (0,-.05)--(0,.05);
\draw (1,-.05)--(1,.05);
\draw (0,.2)--(1,.2); 
\draw (0,.15)--(0,.25);
\draw (1,.15)--(1,.25);
\foreach \x in {.37,.47,.57,.67}
{\filldraw (\x,.2) circle (0.01);}
\foreach \x in {.25,.285,.4,.48,.5,.52,.6,.715,.75}
{\filldraw (\x+.14,0) circle (0.01);
\node at (.2,.2) {\huge \CheckmarkBold};
\node at (.2,0) {\huge \CheckmarkBold};
\node at (.5,-.15) {\huge fair};
}
\end{scope}
\begin{scope}[xshift=1.2cm]
\draw (0,0)--(1,0); 
\draw (0,-.05)--(0,.05);
\draw (1,-.05)--(1,.05);
\draw (0,.2)--(1,.2); 
\draw (0,.15)--(0,.25);
\draw (1,.15)--(1,.25);
\foreach \x in {.37,.47,.57,.6}
{\filldraw (\x,.2) circle (0.01);}
\foreach \x in {.25,.32,.4,.48,.5,.58,.6,.715,.75}
{\filldraw (\x+.09,0) circle (0.01);
\node at (.2,.2) {\huge \XSolidBrush};
\node at (.2,0) {\huge \XSolidBrush};
\node at (.5,-.15) {\huge not fair};
}
\end{scope}
\end{tikzpicture}
\caption{Four election outcomes, shown as vote shares by district.  On the left-hand side, the $v_i$ are symmetric about their center, so all partisan symmetry scores are perfect.  On the right-hand side, non-symmetric outcomes. The partisan symmetry standard can be eyeballed by a glance at the vote shares in the districts.}
\label{fig:eyeball}
\end{figure}
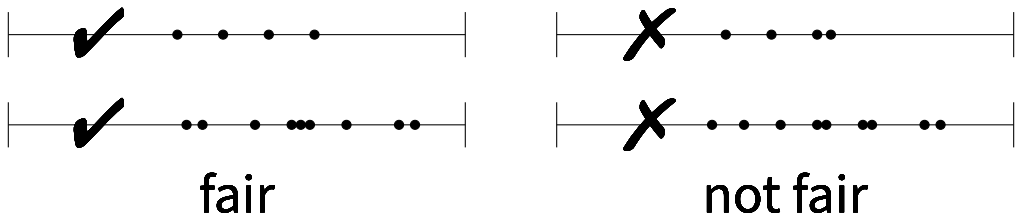

\section{Paradoxes with signed symmetry scores}

Recall that mean-median and partisan bias are {\em signed} scores that are supposed to identify which party has an advantage and by what amount. A positive score is said to indicate an advantage for Party A (the point-of-view party whose vote shares are reported in $\V$).
Let us say that a {\em paradox} occurs when the score
indicates an advantage for one party even though it has the fewest seats it can possibly earn with its vote share, say.\footnote{In practical settings, it may be hard to calculate the absolute minimum that is possible, so this may refer to an extremely low number of seats with respect to the known range of alternatives, as in the figures below.}
In other words, a paradox means that the score makes a sign error with respect to the standard definition of partisan gerrymandering. 

When there is an extremely skewed outcome (with a vote share for one party exceeding 75\%), we will show that paradoxes {\em always} occur, just as a matter of arithmetic.  But even for less skewed elections with a vote share between 62.5 and 75\% for the leading party---which frequently occur in 
practice!---mundane realities of political geography can force these sign paradoxes.  

To illustrate these observations, we will begin with the case
of $k=4$ districts, where the algebra is simpler.
The issues are not limited to small $k$, however:   in the empirical section we will find paradoxes
of this kind in $k=13$ and $k=36$ cases as well.

\begin{example}[Paradoxes forced by arithmetic] Suppose we have $k=4$
districts and an extremely skewed election in favor of 
Party $A$, achieving $75\% < \vbar < 87.5\%$.
With equal turnout, Party B can get at most one seat.
However, every vote vector $\V$ achieving
this outcome (one B seat) yields $\MM\ge \vbar-\frac 34> 0$.  In particular,
such districting plans  all have positive $\MM$ and $\PB$, 
paradoxically indicating an advantage for Party A in every case where Party B  gets representation.
\end{example}
    
The demonstration is simple arithmetic.  Since $\frac 12 (v_2+v_3) =\vmed$, we have 
$$\vbar=\frac 14 (v_1+v_2+v_3+v_4) = \frac{v_1+v_4}4 + \frac{v_2+v_3}4 \implies 
\vmed - \vbar = \vbar - \frac{v_1+v_4}2.$$    
Since $v_1\le \frac 12$ (for B to win a seat) and $v_4\le 1$, we get $\MM=\vmed-\vbar \ge \vbar-\frac 34$, as needed.
    
 A stronger statement can be made if one takes
political geography into account.
It was shown by Duchin--Gladkova--Henninger-Voss--Newman--Wheelen in a study of Massachusetts \citep{VRDI-MA} that, if the precincts are treated as atoms that are not to be split in redistricting, then several recent elections have the property that no choice of district lines can create even one district with  Republican vote share over $1/2$.  This is because Republican votes are distributed extremely uniformly across the precincts of the state.  While other states are not as uniform as Massachusetts, it is still true that there is some upper bound $Q$ on the vote share that is possible for each party in any district.  When this bound satisfies $Q<2\vbar-\frac 12$, even moderately skewed elections are forced to exhibit paradoxical symmetry scores.  As we will see below, having all $v_i<2\vbar-\frac 12$ ensures both that one seat is the best outcome for Party B and that the median vote share is greater than the mean.

\begin{example}[Paradoxes forced by geography]
Suppose we have $k=4$ districts and a skewed election 
in favor of Party A, with $62.5\%\le \vbar < 75\%$. 
Suppose the geography of the election has Party A 
support arranged uniformly enough that districts
can not exceed a share $Q$ of A votes, for some 
$Q<2\vbar-\frac 12$.  Then with equal turnout, Party B
can get at most one seat.  However, 
every vote vector $\V$ achieving this outcome (one B seat) has a positive
$\MM$ and $\PB$, paradoxically indicating an advantage
for Party A in every case where Party B gets representation.
\end{example}

\begin{proof}  First, it is easy to see that Party B can't achieve two seats:  in that case, we would have $v_1,v_2\le \frac 12$.  
Since we also have $v_3,v_4\le Q < 2\vbar-\frac 12$, we can average the $v_i$ to get the contradiction $\vbar<\vbar$.

To see that $\MM>0$, we write
$$\MM = \vmed - \vbar = \frac{v_2+v_3}2 - \frac{v_1+v_2+v_3+v_4}4 =  \frac{v_1+v_2+v_3+v_4}4 - \frac{v_1+v_4}2
=\vbar - \frac{v_1+v_4}2.
$$
Since $v_1<\frac 12$ and $v_4\le Q <2\vbar - \frac 12$, we have $\MM>\vbar - \frac{2Q+1}4>\vbar-\vbar=0$.  
\end{proof}

\section{Investigations with
observed vote data}

\subsection{Methods}\label{sec:methods}
In this section we illustrate the theoretical issues from above, using naturalistically 
observed election data together with a Markov chain technique that produces large ensembles of districting plans.\footnote{All data and code are public and freely available
for inspection and replication in GitHub \citep{gerrychain,PSymm} and Dataverse \citep{DVN/MVVLQC_2021}.}
In each case, we have run a recombination ("{\sf ReCom}") Markov chain for 100,000 steps---long enough to comfortably achieve
heuristic convergence benchmarks in all scores that 
we measured---while enforcing population balance, 
contiguity, and compactness.\footnote{The population 
balance imposed here is 1\% deviation from ideal 
district size.  Such plans are easily tuneable to 
1-person deviation by refinement at the block level without
significant impact to any other scores discussed here.
Contiguity is enforced by recording adjacency of precincts.
Compactness, at levels comparable to those observed
in human-made plans, is an automatic consequence of the 
spanning-tree-based recombination step.
For more information about the Markov 
chain used here, see \citep{ReCom}.}
Note that some Markov chain methods count every attempted move as a step, even though most proposals are rejected, so that each plan is counted with high multiplicity in the ensemble; in our setup, the proposal itself includes the criteria, and repeats are rare.  100,000 steps produces upwards of 99,600 distinct districting plans in each ensemble presented here.

We ran trials on multiple elections in our dataset and all results are available for comparison \citep{PSymm}.
Below, we  highlight the most recent available Senate race from a Presidential election year in each state, to match conditions across cases as closely as possible.
The data bottleneck is a precinct shapefile matching geography to voting patterns, which is surprisingly difficult to obtain.  
A database of shapefiles is available at \citep{MGGG-states}.  We use the statewide U.S. Senate election rather than an endogenous Congressional voting pattern because the latter is subject to uncontested races and variable incumbency effects.  For instance, Utah's 2016 Congressional race had all four seats contested, but a Republican vote in District 3 went for hard-right Jason Chaffetz, while on the other side of the invisible line to District 4, the vote went to Mia Love, a Black Republican who is outspoken on racial inequities. When the district lines are moved, it's not clear (for example) that a Love voter stays Republican.
The U.S. Senate race had a comparable number of total votes cast to the Congressional race (1,115,608 vs. 1,114,144) and offers a consistent choice of candidates around the state, making it better suited to methods that vary district lines.

\subsection{Utah and the "Utah Paradox"}
We begin with Utah, where the elections that were 
available in our dataset all come from 2016.\footnote{Out of GOV16, SEN16, and PRES16, none
gives an especially pure partisanship signal, because the Democratic candidates for Governor and Senate were quite weak, while the partisanship in the Presidential race was complicated by the presence of a very strong third-party candidate in Evan McMullin, giving that race an extremely different pattern.  The Governor's race shows similar results to the Senate, as the reader can verify in \citep{PSymm}.}
Utah has only four congressional districts and has a heavily skewed partisan preference, with a statewide Republican vote share of 71.55\% in the 2016 Senate race.\footnote{Recall that $\vbar$ is the average of the district vote shares, which will not generally equal the statewide share except under an equal-turnout assumption.}  Though it is far from clear that table-turning between the parties is conceivable in the near future,\footnote{The amount
of linear partisan swing needed to reverse the 
partisan advantage should be viewed as unreasonably
large under these conditions. With respect to SEN16, fully 199 out of 2123 precincts in the state have
Republican vote share that reaches zero under 
this amount of swing.  This is one of the reasons (though not the only reason) that this style of quantifying partisan advantage is poorly suited to Utah.}  Utahns nonetheless recently enacted partisan symmetry consideration into state law.

\begin{figure}
\centering
\begin{tikzpicture}
\node at (0,14) {Utah SEN16: Republican seats won across 100,000 {\sf ReCom} plans};
\node at (0,12)   {\includegraphics[height=1.4in]{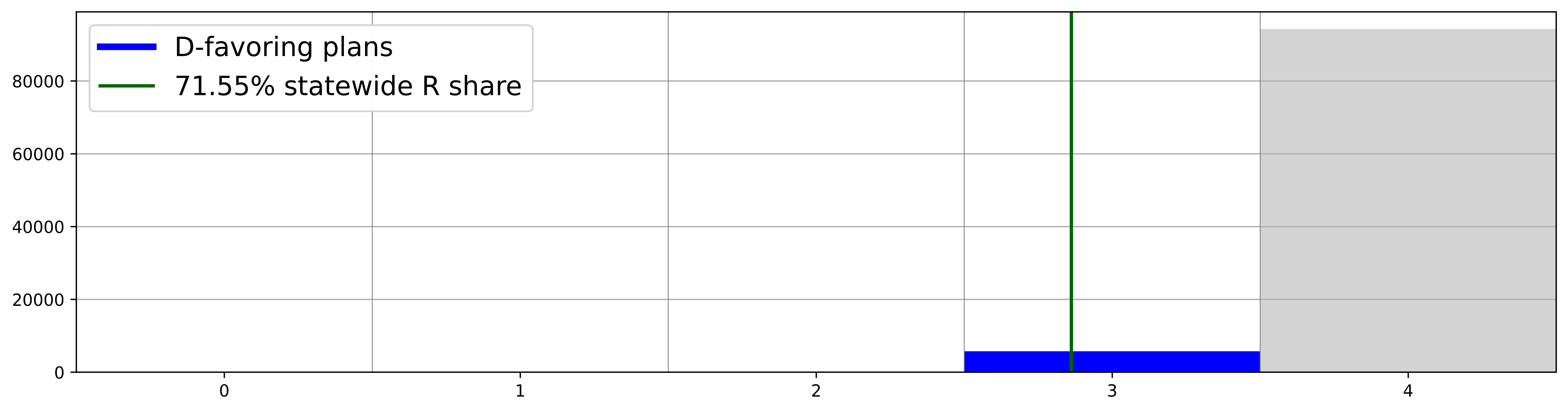}};
\node at (0,10) {Restriction to best $\MM$:
2609 plans with $|\MM|<.001$};
\node at (0,8)  {\includegraphics[height=1.4in]{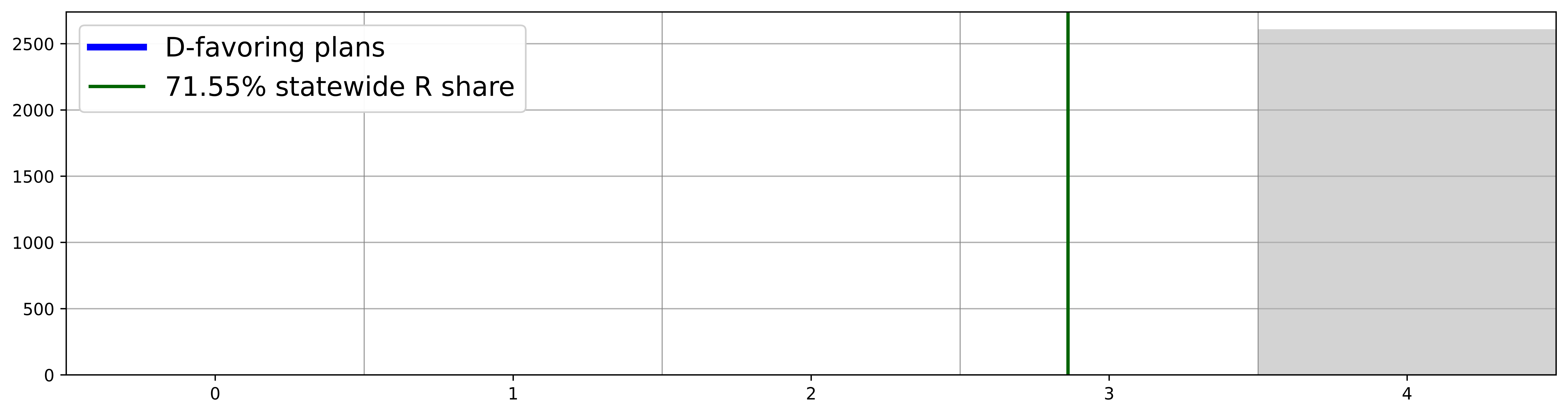}};
\node at (0,6) {Restriction to best $\PG$:
12,369 plans with $\PG<.01$};
\node at (0,4)  {\includegraphics[height=1.4in]{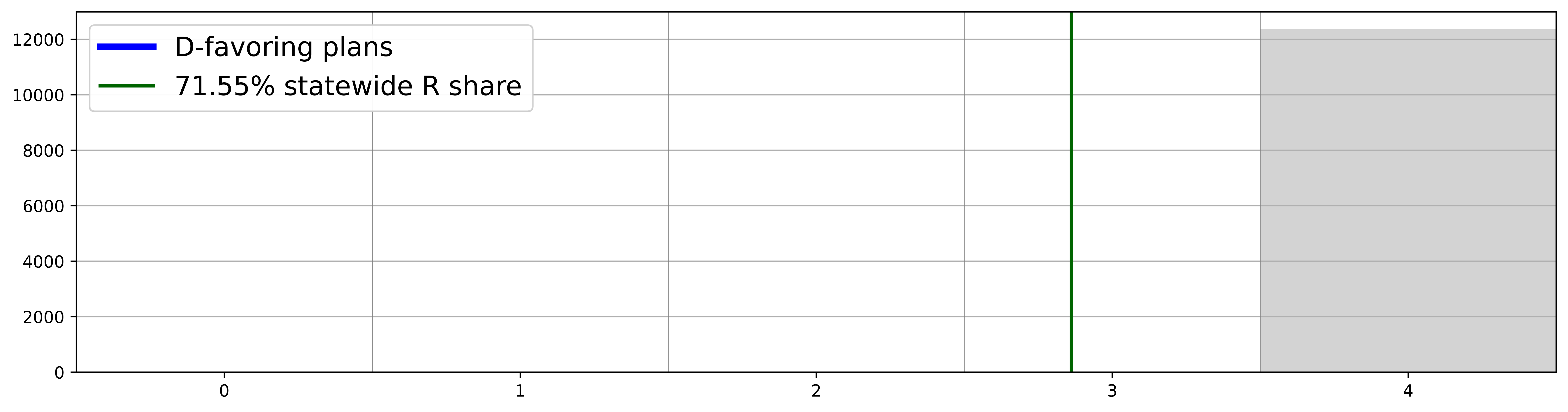}};

\node at (-3.5,0)  {\includegraphics[height=1.15in]{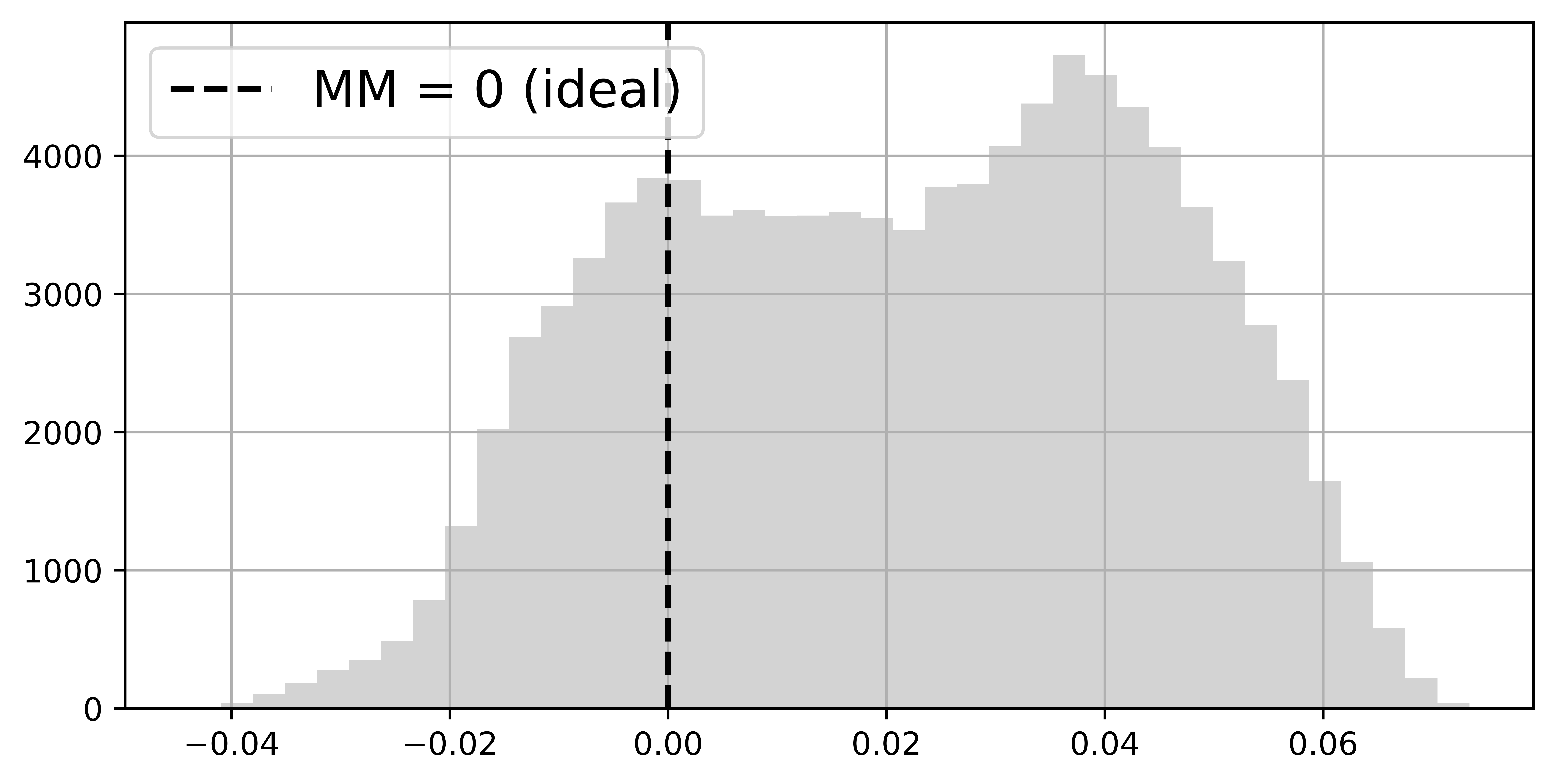}};
\node at (-3.5,1.7) {\small $\MM$ score: full ensemble};
\node at (3.5,0)  {\includegraphics[height=1.15in]{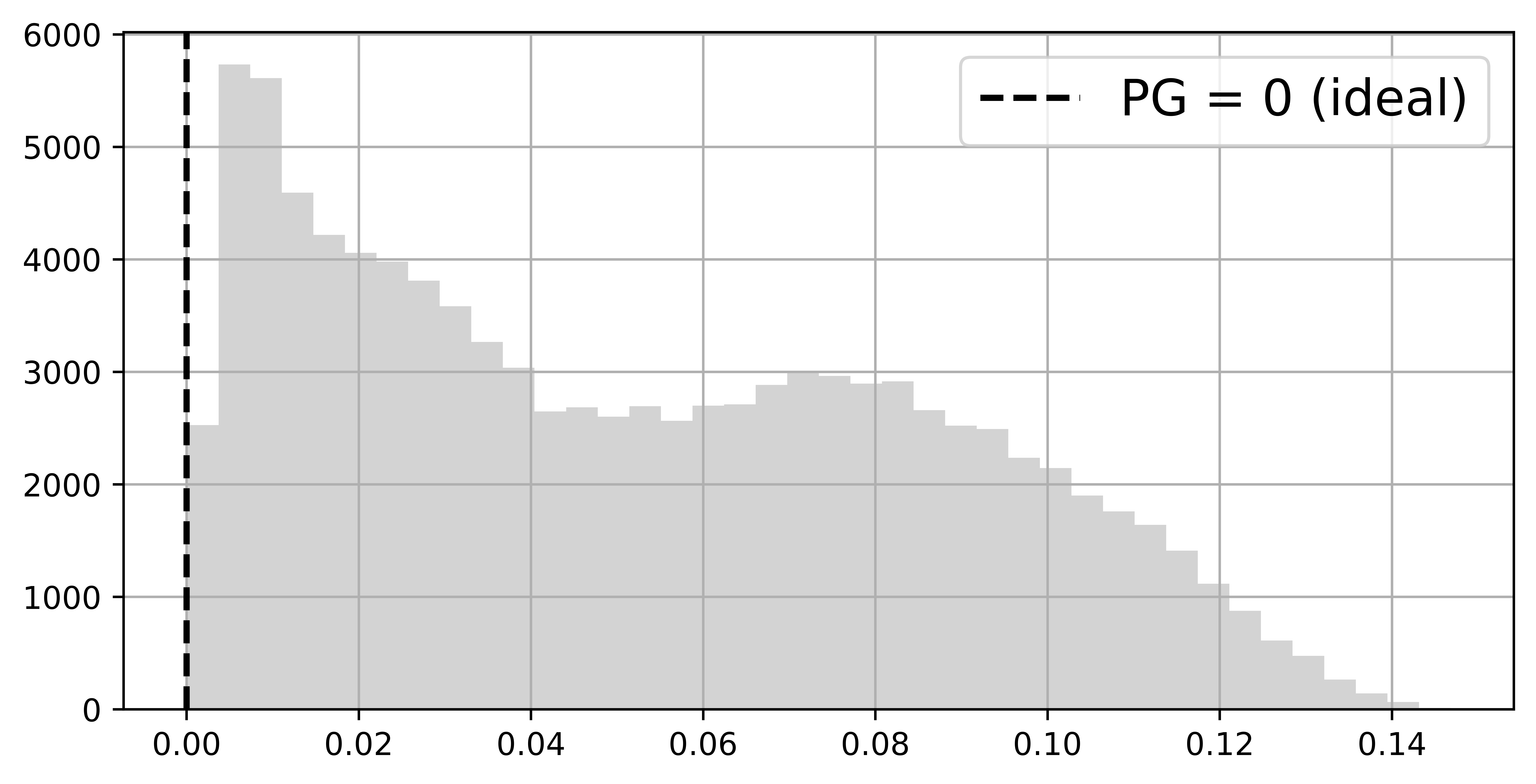}};
\node at (3.5,1.7) {\small $\PG$ score: full ensemble};
\node at (-3.5,-3.3)  {\includegraphics[height=1.15in]{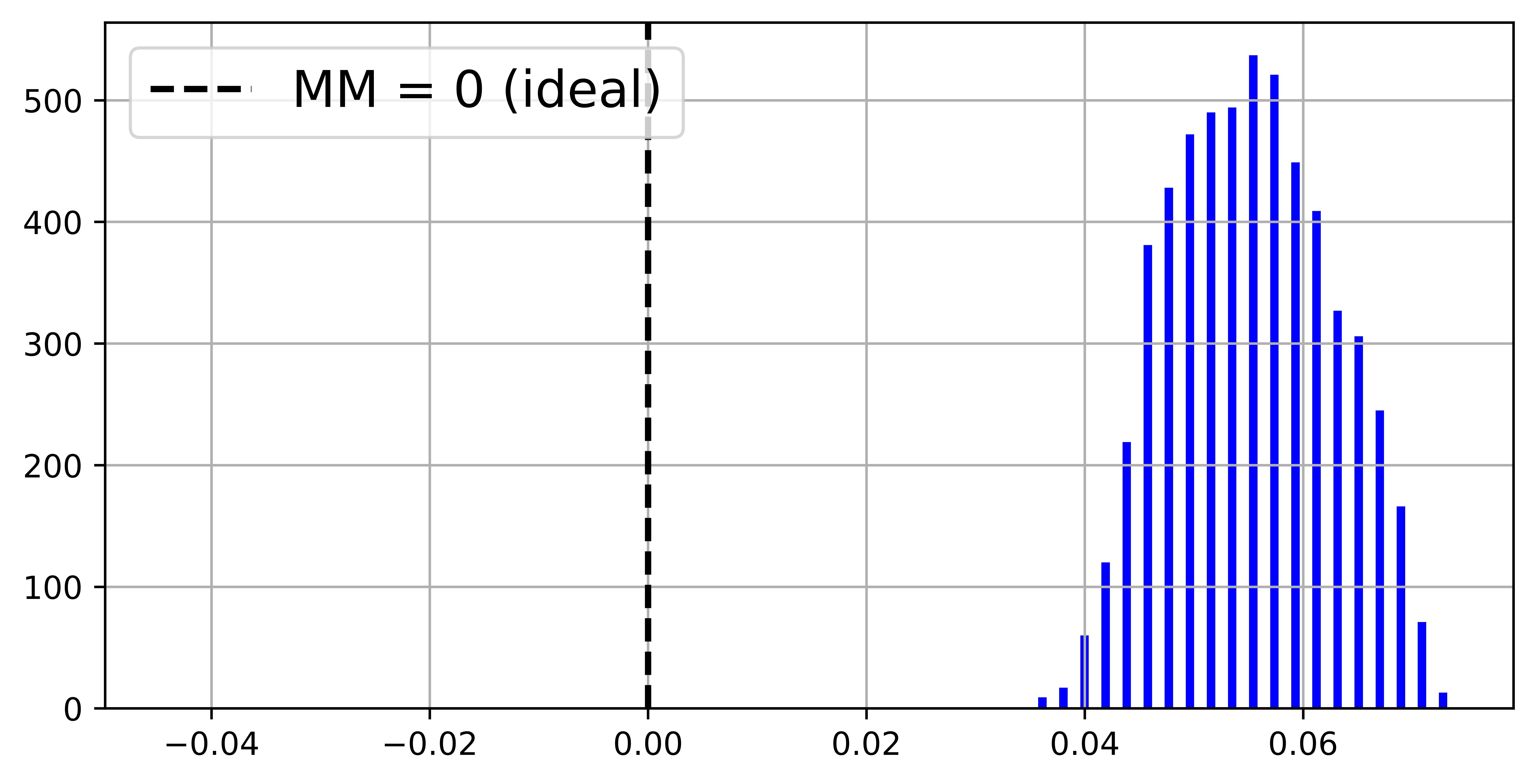}};
\node at (-3.5,-1.7) {\small 5734 most D-favoring};
\node at (3.5,-3.3)  {\includegraphics[height=1.15in]{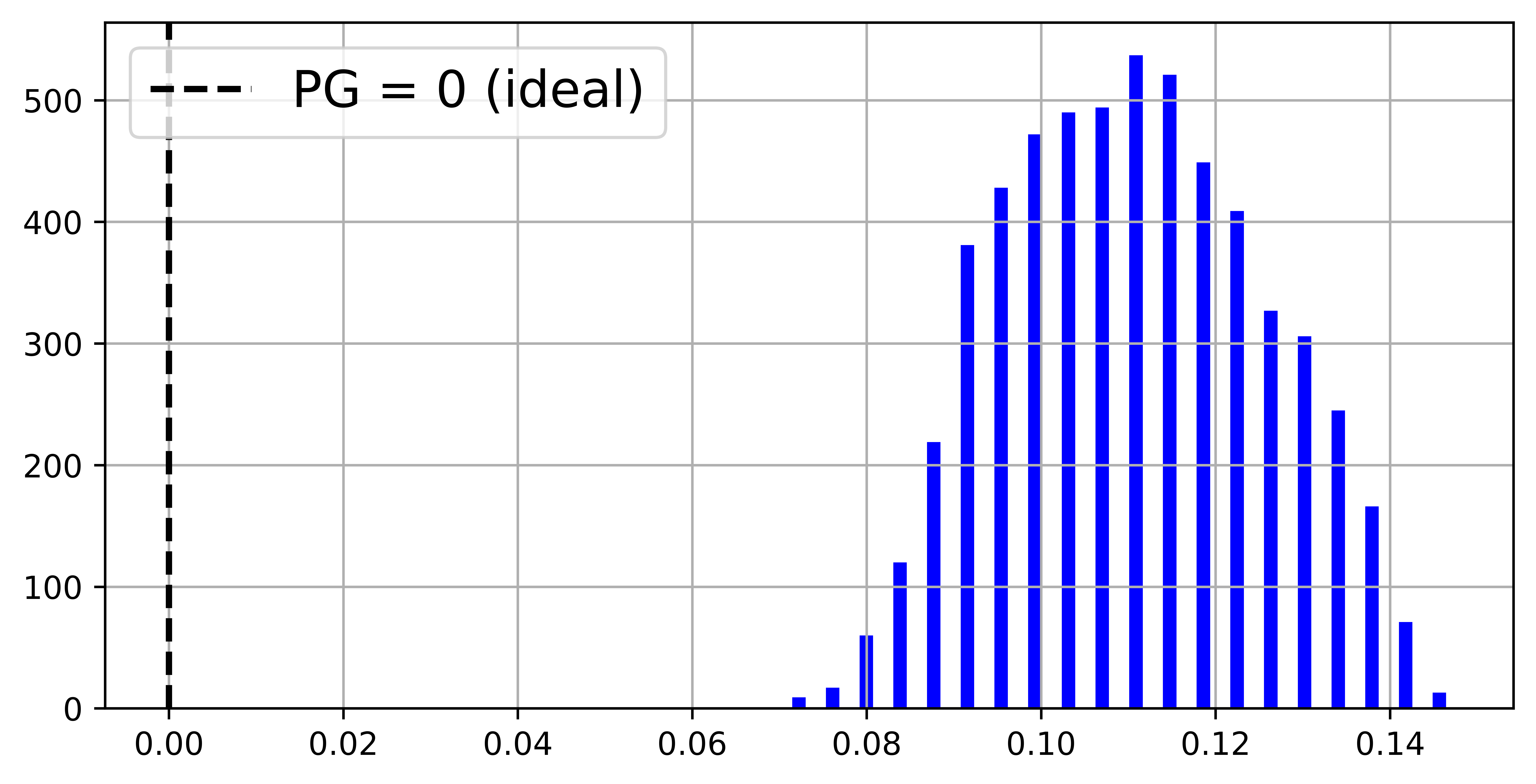}};
\node at (3.5,-1.7) {\small 5734 most D-favoring};
\end{tikzpicture}
\caption{Ensemble outputs for 100,000 Utah Congressional plans with respect to SEN16 votes. Republicans received 71.55\% of the two-way vote in this election, which is marked in the plots to show the corresponding seat share. There are 5734 plans in the ensemble in which Democrats get a seat; these are shown in blue in the top row, but they are absent from the next two rows because a D seat never occurs in plans with good symmetry scores. The last row of the figure shows the $\MM$ and $\PG$ histograms restricted to the plans with a D seat. The empirical data corroborates the prediction that good symmetry scores lock out Democratic representation, and illustrates the "Utah paradox" that a Democratic-won seat always receives the label of a Republican gerrymander.}
  \label{fig:ut_paradox}
\end{figure}

Figure~\ref{fig:ut_paradox} shows outcomes from our
100,000-step ensemble.
The vast majority ($94.266\%$) of Utah plans found in our ensemble have all four R seats, 
with the remaining plans giving 3-1 splits.  
The chain found 5734 plans with  3 Republican and 1 Democratic seats, and we see that all of these have $\PG$ scores  above 0.06. 
Below, we explore and explain these bounds on seats
and scores.

When looking at the full $\PG$ histogram, 
we see a large bulk of plans with nearly-ideal $\PG$ scores, all giving a Republican sweep (four out of four R seats).
This is surprising enough to deserve a name.

\medskip

\centerline{\bf The Utah Paradox}
\begin{itemize}
\item Partisan 
symmetry scores near zero are supposed to indicate
fairness, and signed symmetry scores are supposed
to indicate which party is advantaged.
\item There are many trillions of valid Congressional 
plans in Utah, and under reasonable geographical assumptions, every single one of them
with $\PG$ close to zero is mathematically guaranteed to yield a Republican sweep of the seats.  
In particular,
even constraining symmetry scores to better than the ensemble average (for any reasonably diverse neutral ensemble of alternatives) would 
deterministically impose a partisan outcome:  the 
one in which Democrats are locked out of representation.
\item Furthermore, the signed scores make a sign error:  they report
all plans with Democratic representation to be 
significant pro-Republican gerrymanders.
\end{itemize}

\bigskip

\noindent {\em Geographic assumptions:}
The UT-SEN16 election has a statewide R share of $.7155$, so this is roughly equal to the district mean $\vbar$ (or exactly equal in the equal-turnout case).  If we can show that the possible Republican share of a district is bounded above by any $Q<.931$, then the arguments of the last section show that Democrats can secure at most one seat, and that every plan with Democratic representation has the sign error $\MM>0$.
We consider the assumption that no district can exceed $93\%$ Republican share to be very reasonable.
Indeed, even a greedy assemblage of the 608 precincts with the highest Republican share in that race (which is the number needed to reach the ideal population of a Congressional district) only produces a district with R share $.888$.  And this is even without imposing a requirement that districts be contiguous, which certainly limits the possibilities further and only strengthens the bound.  As a further indication,
our Markov chain run of contiguous plans never encounters a district with Republican share over $.8595$.

\begin{example}[The Utah paradox, empirical]
The UT-SEN16 vote pattern can be divided into 
4R-0D seats or 3R-1D seats.
However, even though $\MM$, $\PB$, and $\PG$ scores
can all get arbitrarily close to zero, there 
are no reasonably symmetric plans that secure a Democratic seat.
In our algorithmic search, every plan with nonzero Democratic representation 
has $\PG>.069$, $\MM>.034$, and $\PB\ge .25$, which is in the worst half of scores observed for each of those scores.  
Thus even a mild constraint on partisan symmetry has imposed an empirical Democratic lockout. As predicted by the analytic paradox, 
all plans with D representation are reported as significant R-favoring  gerrymanders.
\end{example}

The basic idea here is extremely simple and readers can try it for themselves.  Choose any four numbers from 0 to 1 whose mean is $.7155$.  If one of them is at or below $.5$ (a Democratic-won seat), you will find that the median of the four scores is greater than the mean, so the mean-median score is positive.  It is simply false that a median higher than the mean is a flag of advantage for the point-of-view party.

 As described in the introduction,
Utah recently became the first state to encode 
partisan symmetry as a districting criterion in statute.
This makes the Utah Paradox quite a striking example of the worries raised by using partisan symmetry 
scores in practice.

\begin{figure}
\centering
\begin{tikzpicture}
\node at (0,14) {Texas SEN12: Republican seats won across 100,000 {\sf ReCom} plans};
\node at (0,12)   {\includegraphics[height=1.4in]{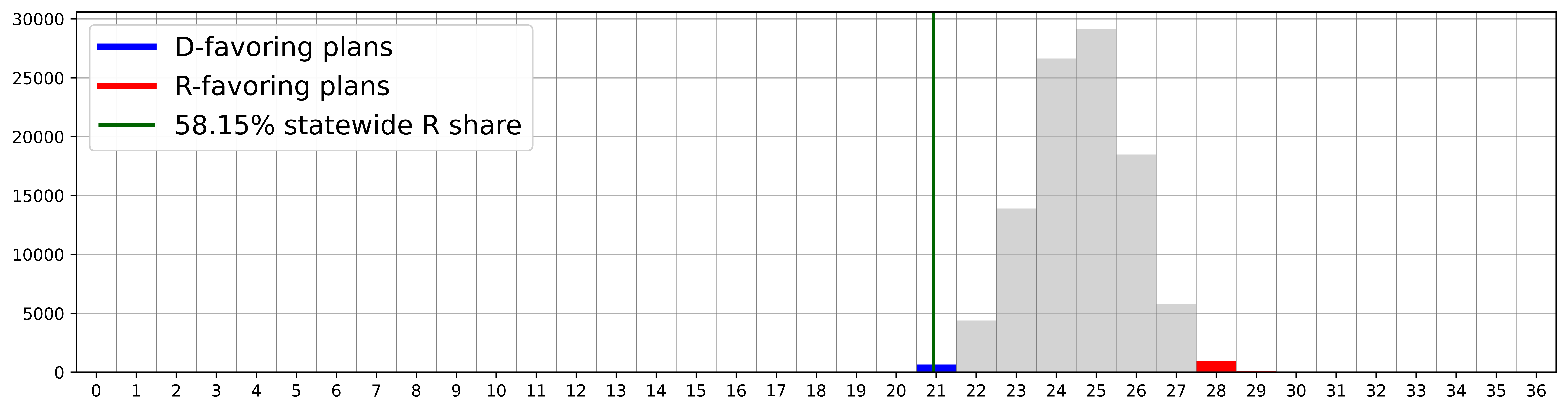}};
\node at (0,10) {Restriction to best $\MM$:
879 plans with $|\MM|<.001$};
\node at (0,8)  {\includegraphics[height=1.4in]{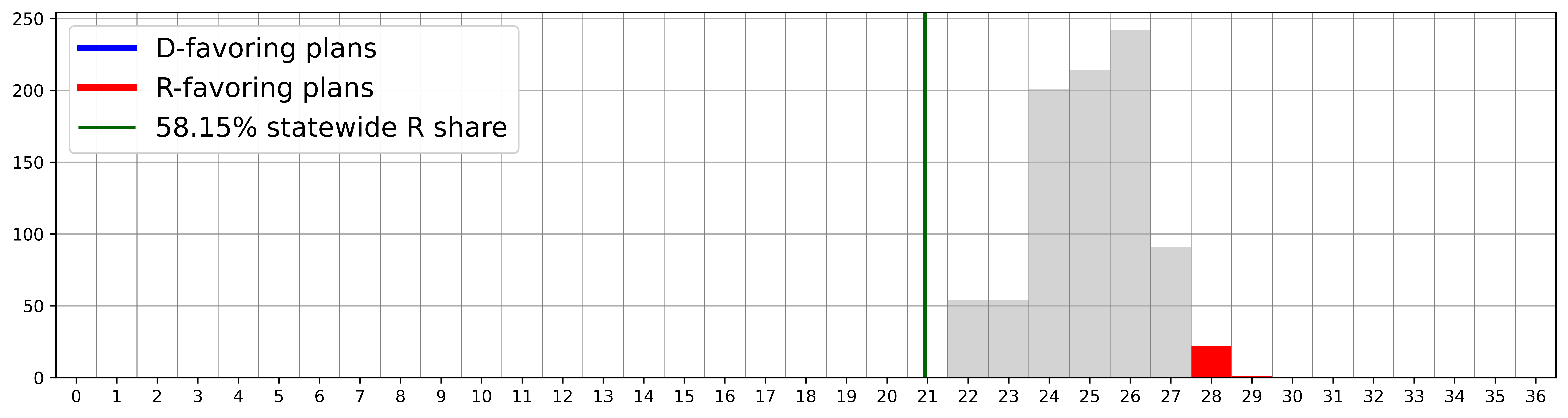}};
\node at (0,6) {Restriction to best $\PG$:
5557 plans with $\PG<.03$};
\node at (0,4)  {\includegraphics[height=1.4in]{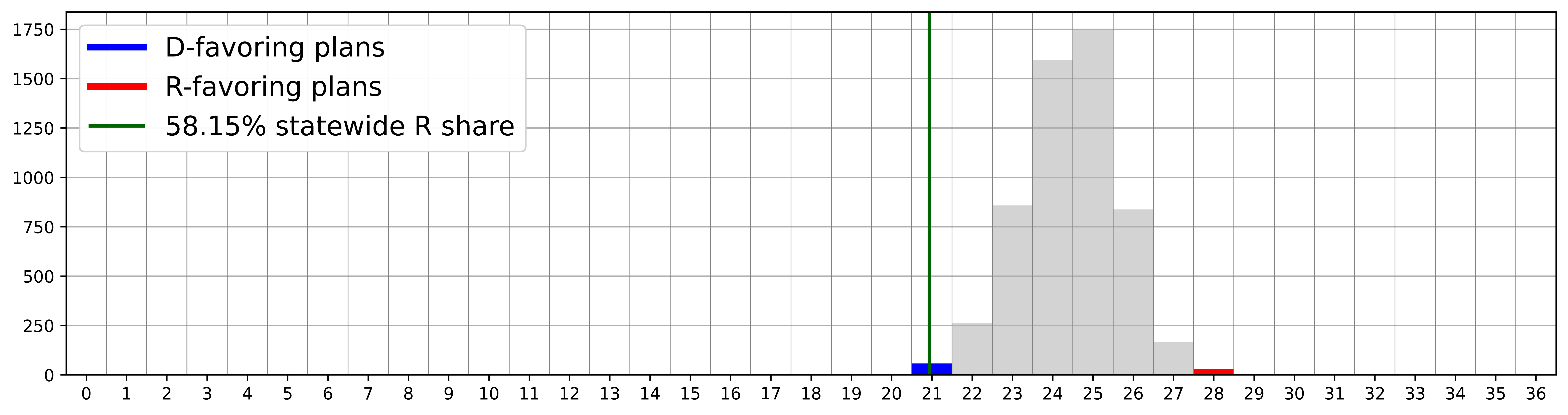}};

\node at (-3.5,0)  {\includegraphics[height=1.15in]{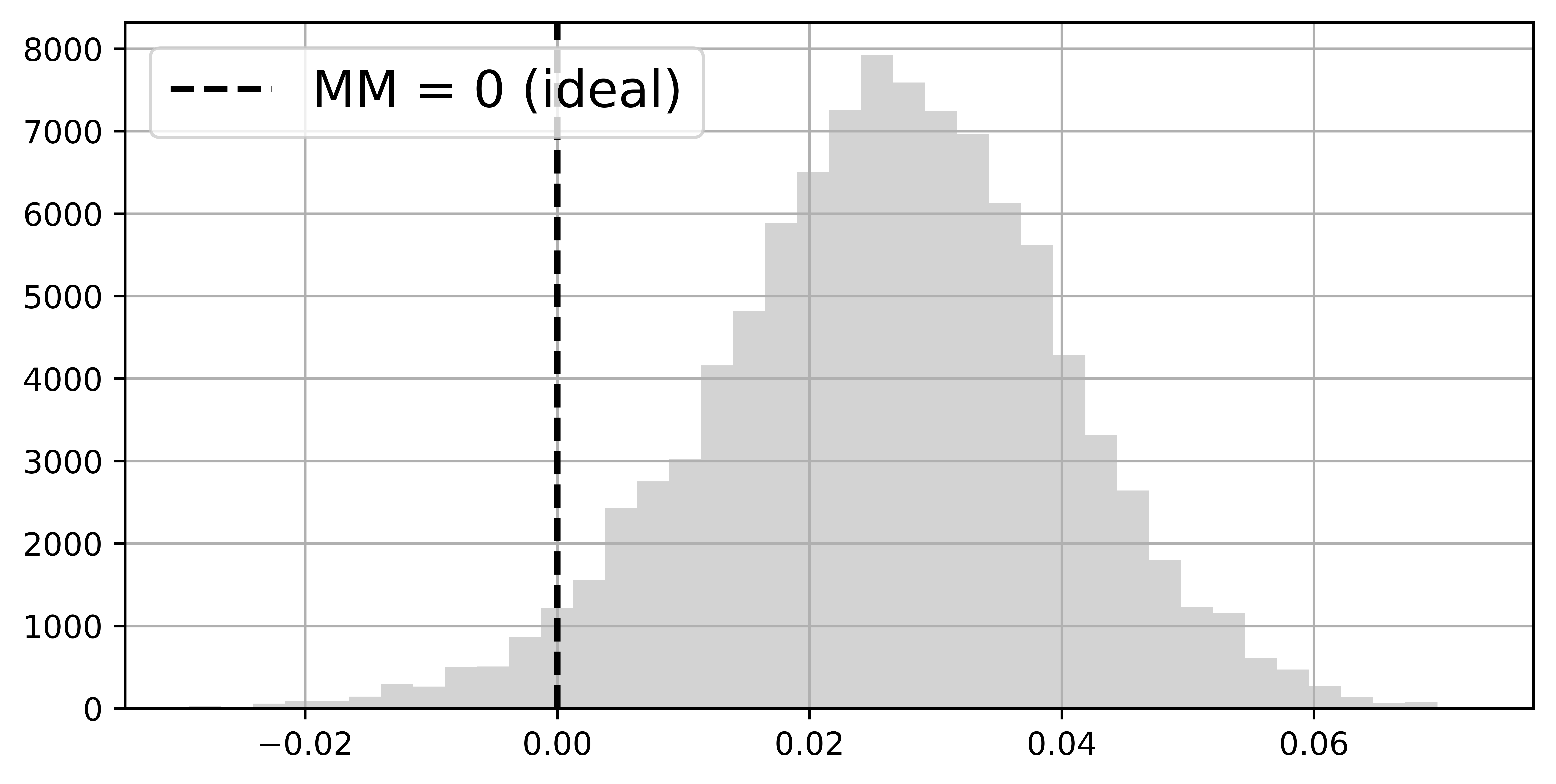}};
\node at (-3.5,1.7) {\small $\MM$ score: full ensemble};
\node at (3.5,0)  {\includegraphics[height=1.15in]{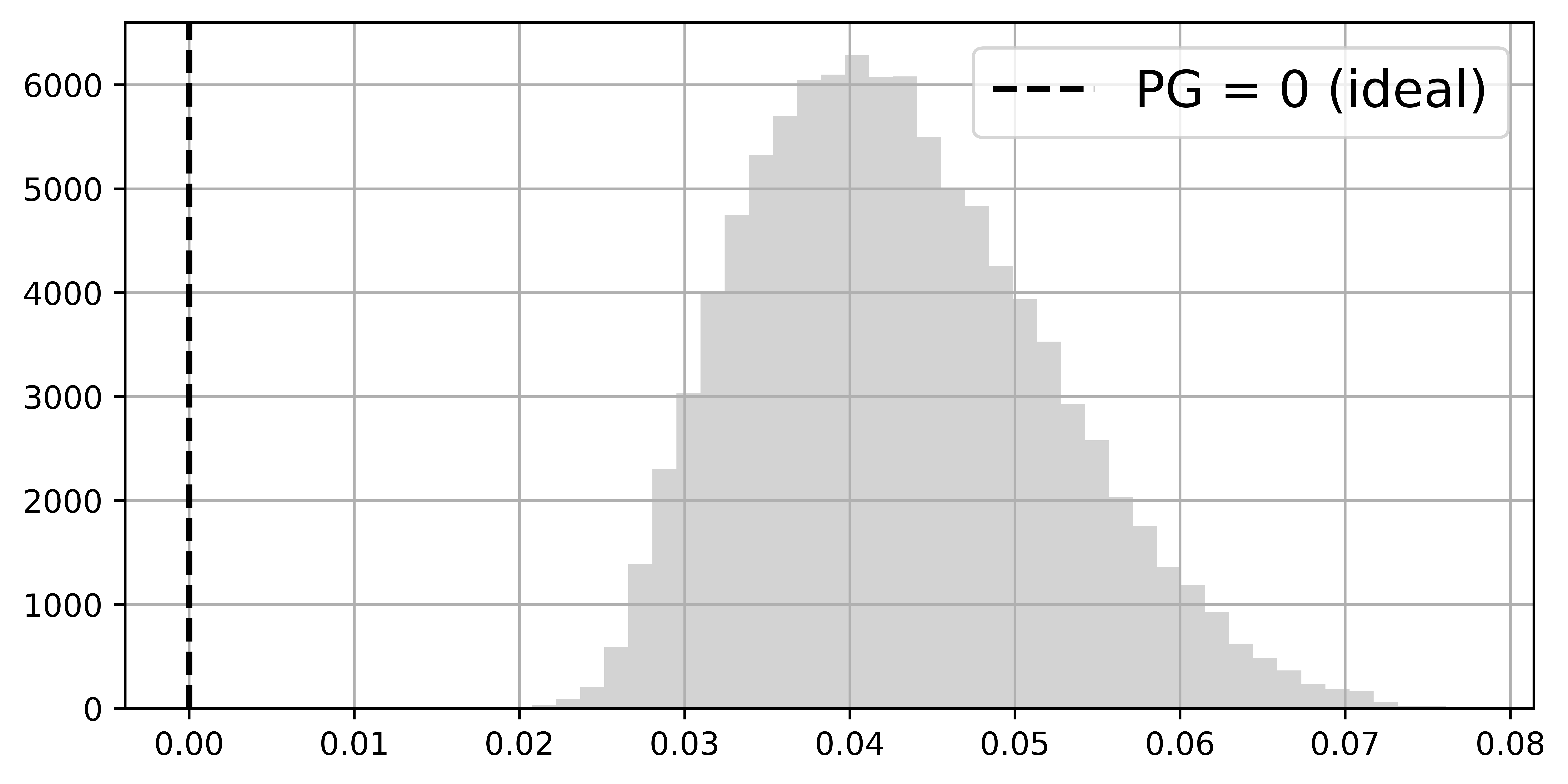}};
\node at (3.5,1.7) {\small $\PG$ score: full ensemble};
\node at (-3.5,-3.3)  {\includegraphics[height=1.15in]{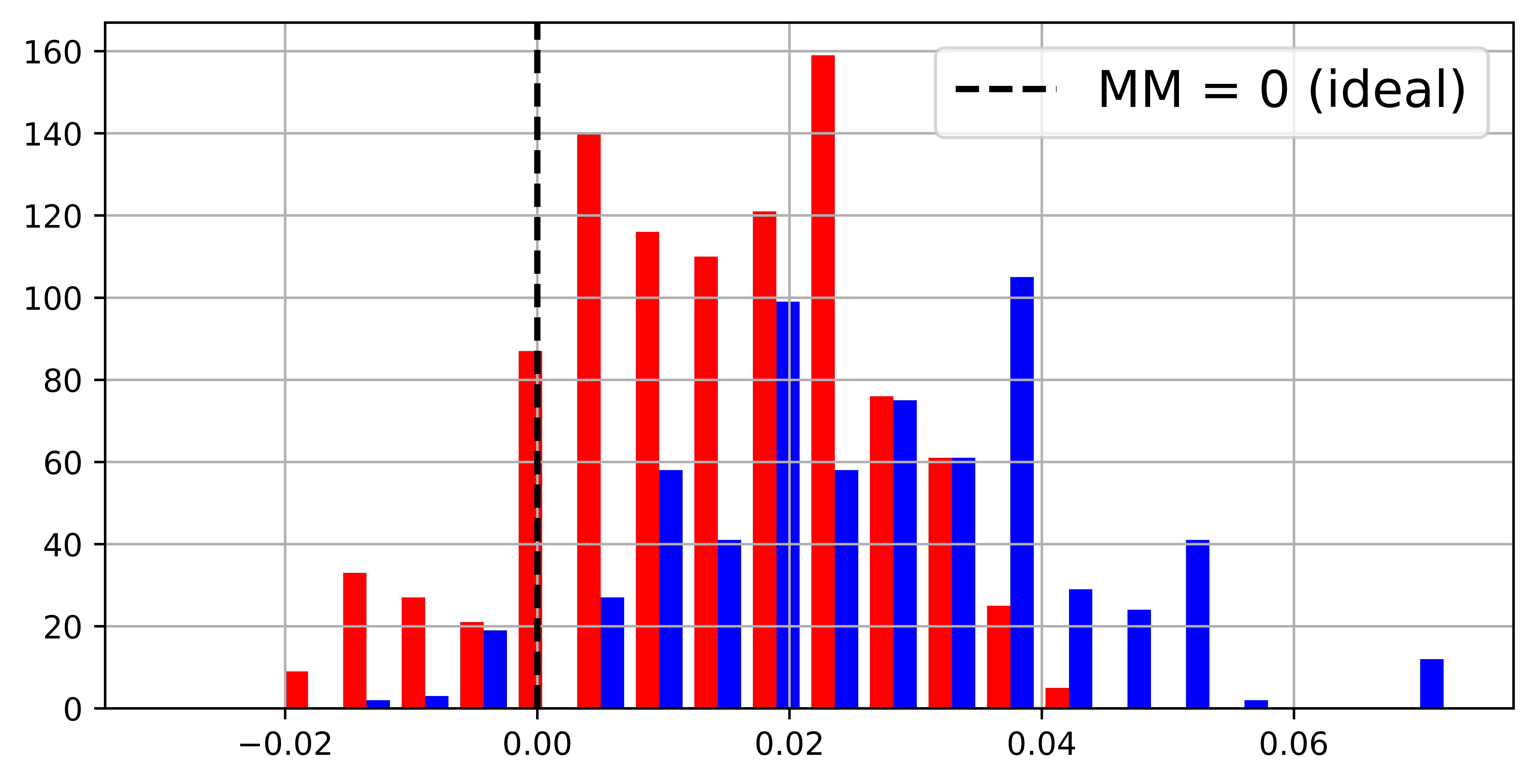}};
\node at (-3.5,-1.7) {\small 1646 partisan outlier plans};
\node at (3.5,-3.3)  {\includegraphics[height=1.15in]{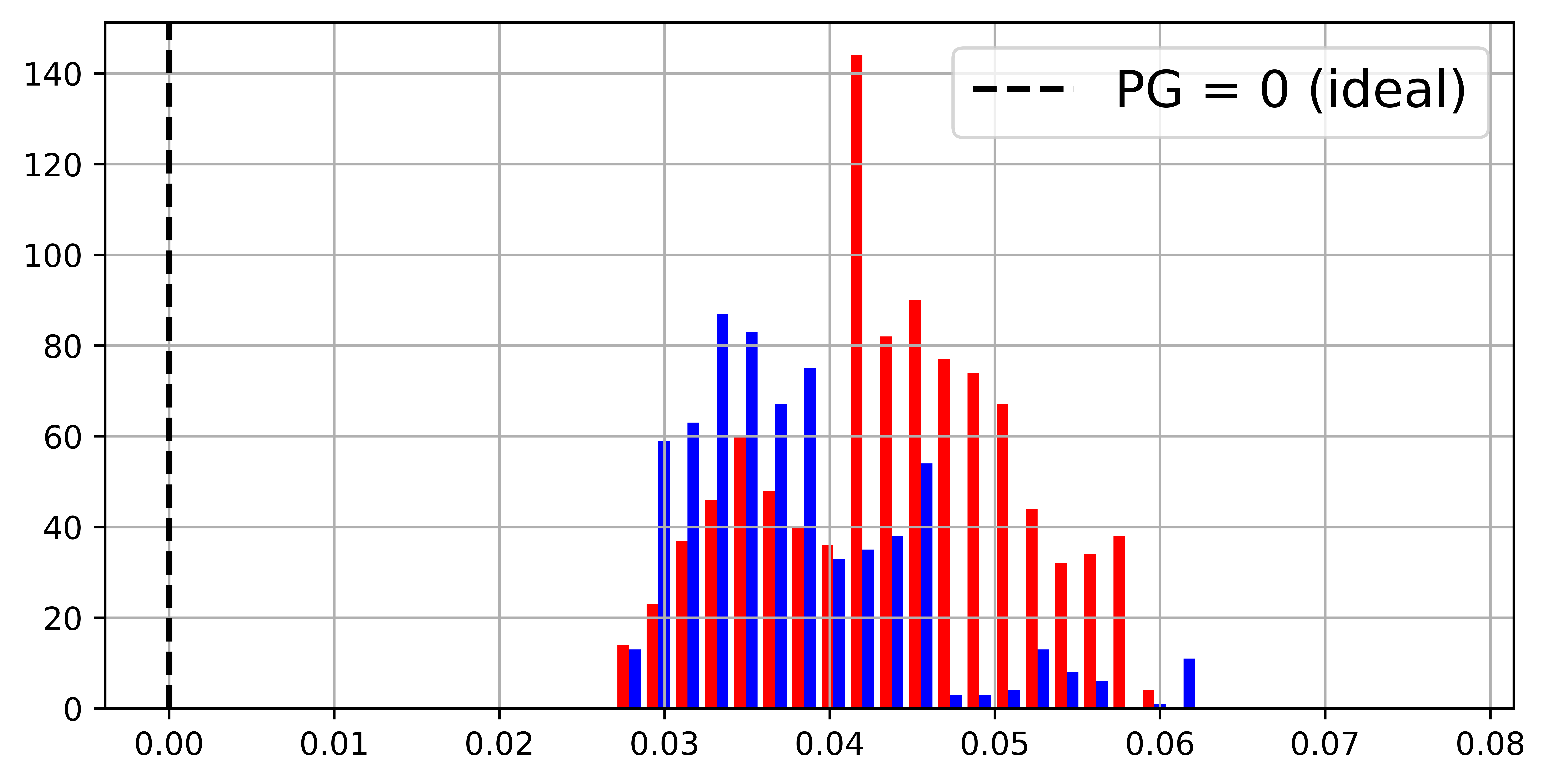}};
\node at (3.5,-1.7) {\small 1646 partisan outlier plans};
\end{tikzpicture}
\caption{Ensemble outputs for Texas Congressional plans with respect to SEN12 votes. Republicans received 58.15\% of the two-way vote in this election, which is marked in the plots to show the corresponding seat share. There are 1646 plans in the ensemble that are seats outliers for one party or the other; these are shown in red and blue in the top row and their relative frequency can be observed in the next two rows, which focus on plans with the best symmetry scores. The last row of the figure shows the $\MM$ and $\PG$ histograms restricted to the 1646 outlier plans flagged above. The scores are shown to be readily gamed: numerous extreme plans are found with near-optimal symmetry scores. In this sample, most extreme Democratic-favoring plans are labeled Republican gerrymanders by the mean-median score, and some extreme Republican-favoring plans are labeled Democratic gerrymanders.}
    \label{fig:texas}
\end{figure}

\begin{figure}
\centering
\begin{tikzpicture}
\node at (0,14) {North Carolina SEN16: Republican seats won across 100,000 {\sf ReCom} plans};
\node at (0,12)   {\includegraphics[height=1.4in]{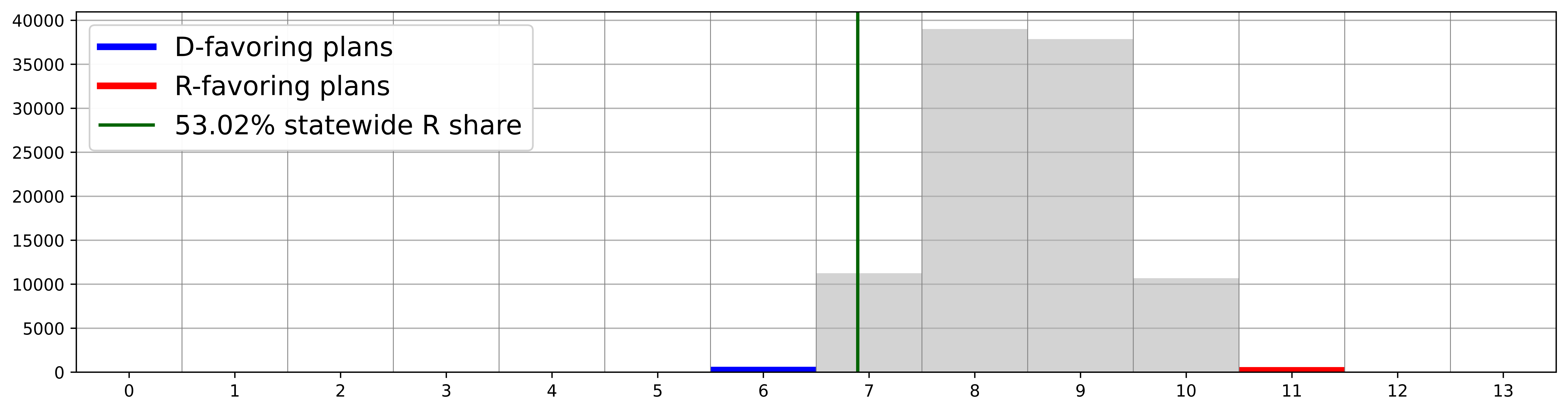}};
\node at (0,10) {Restriction to best $\MM$:
4049 plans with $|\MM|<.001$};
\node at (0,8)  {\includegraphics[height=1.4in]{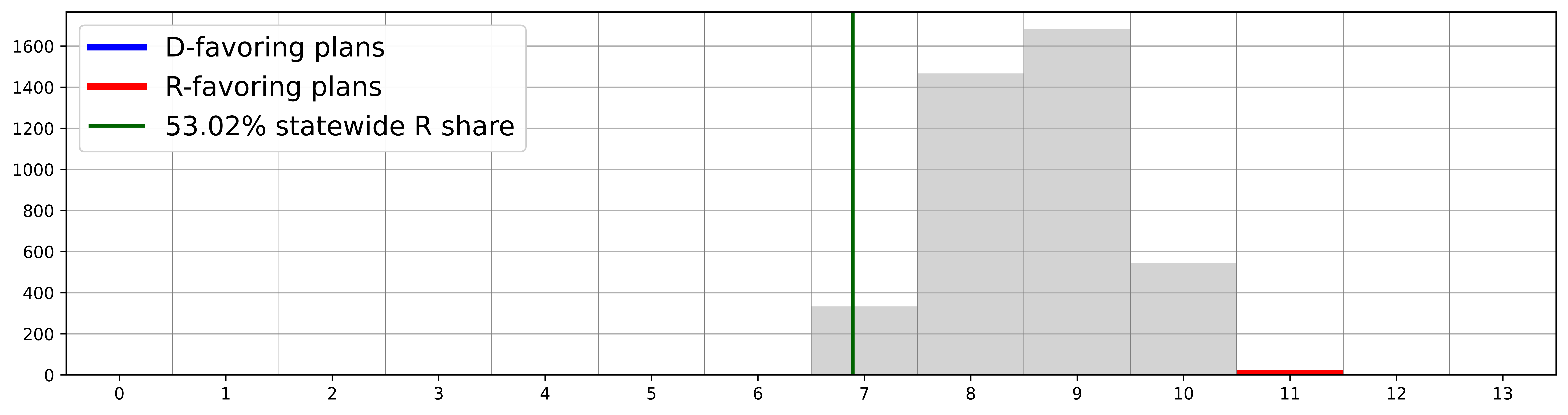}};
\node at (0,6) {Restriction to best $\PG$:
1287 plans with $\PG<.01$};
\node at (0,4)  {\includegraphics[height=1.4in]{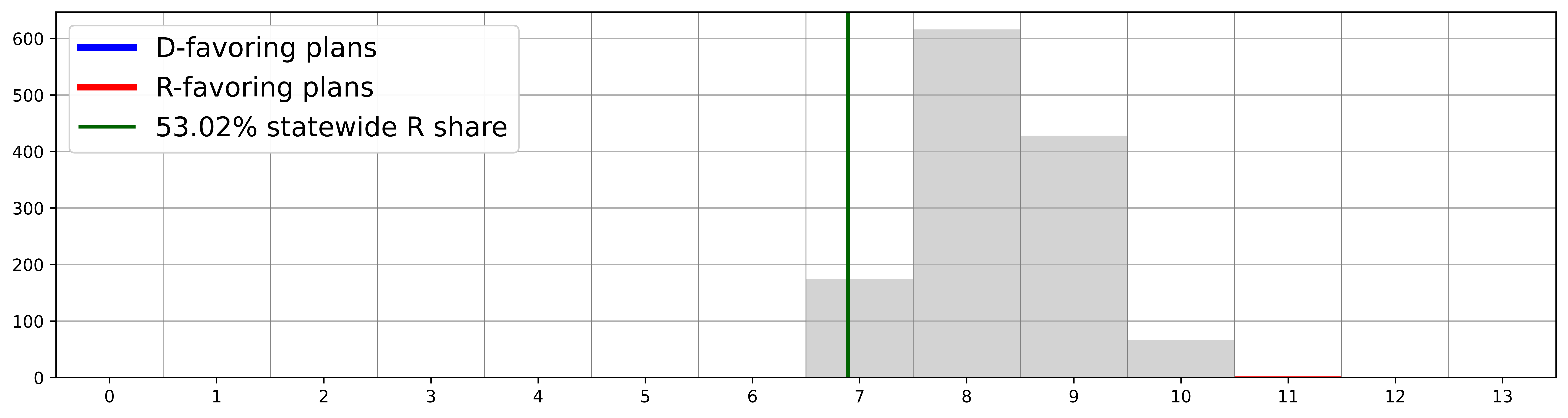}};

\node at (-3.5,0)  {\includegraphics[height=1.15in]{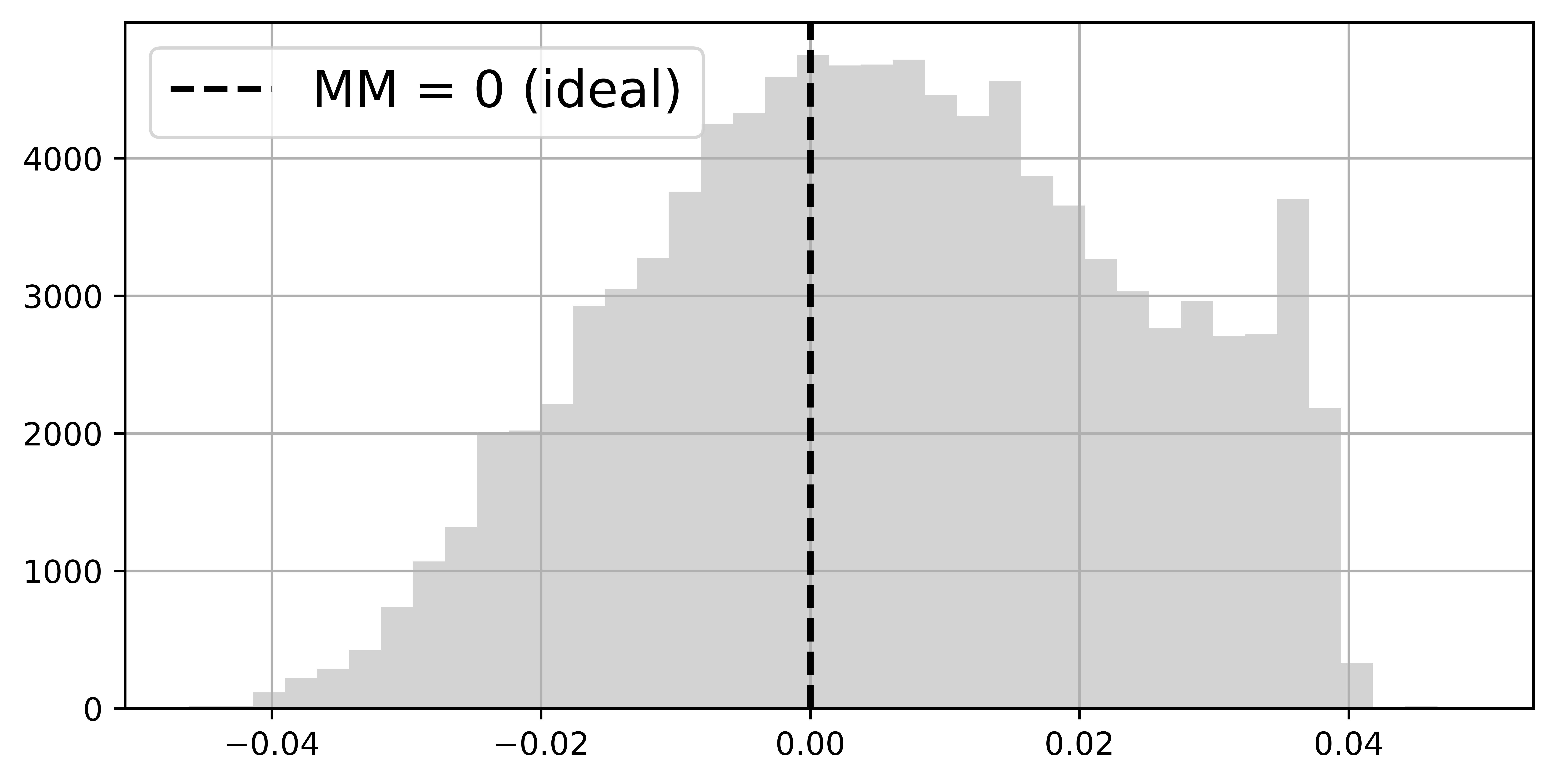}};
\node at (-3.5,1.7) {\small $\MM$ score: full ensemble};
\node at (3.5,0)  {\includegraphics[height=1.15in]{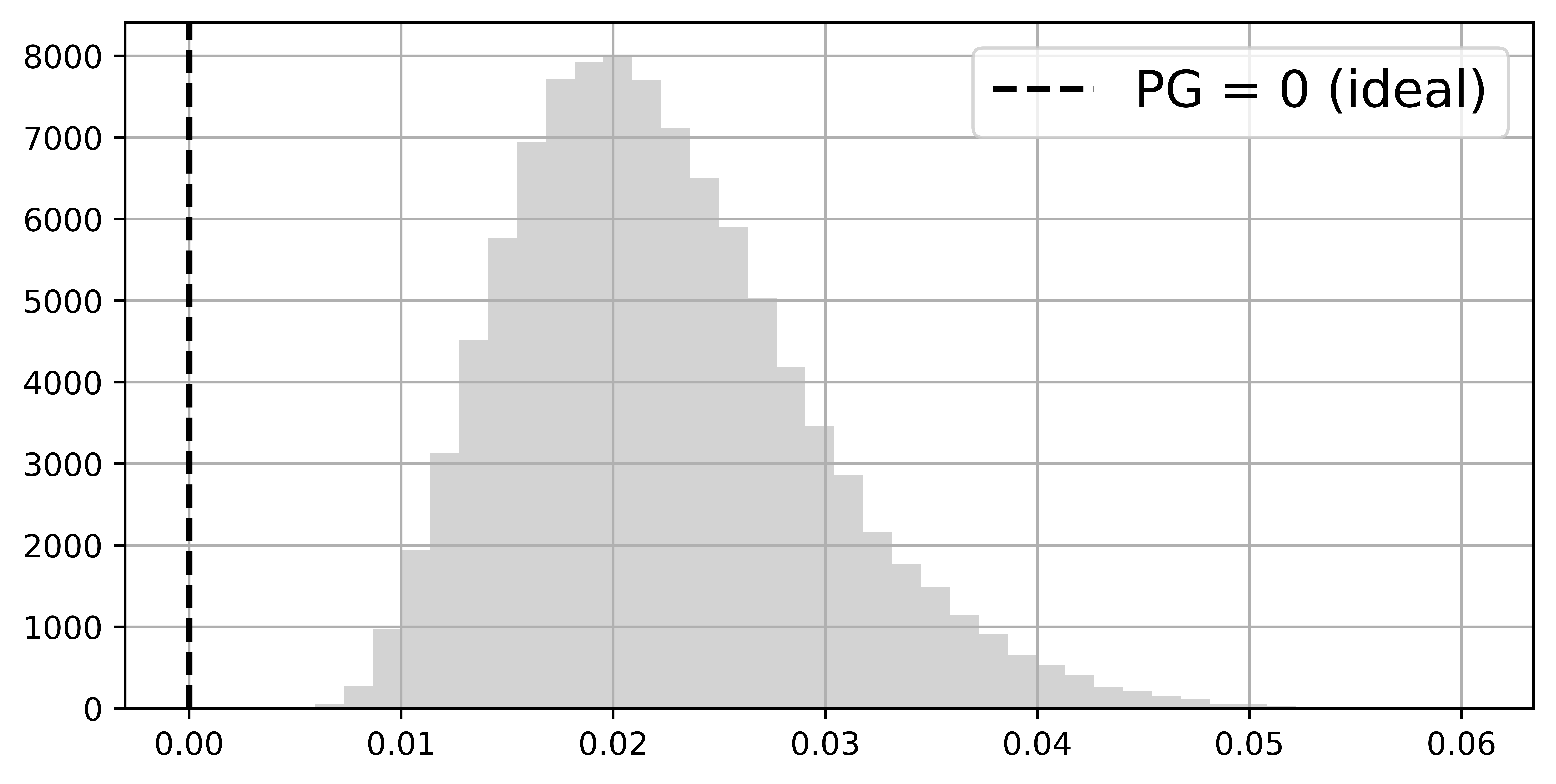}};
\node at (3.5,1.7) {\small $\PG$ score: full ensemble};
\node at (-3.5,-3.3)  {\includegraphics[height=1.15in]{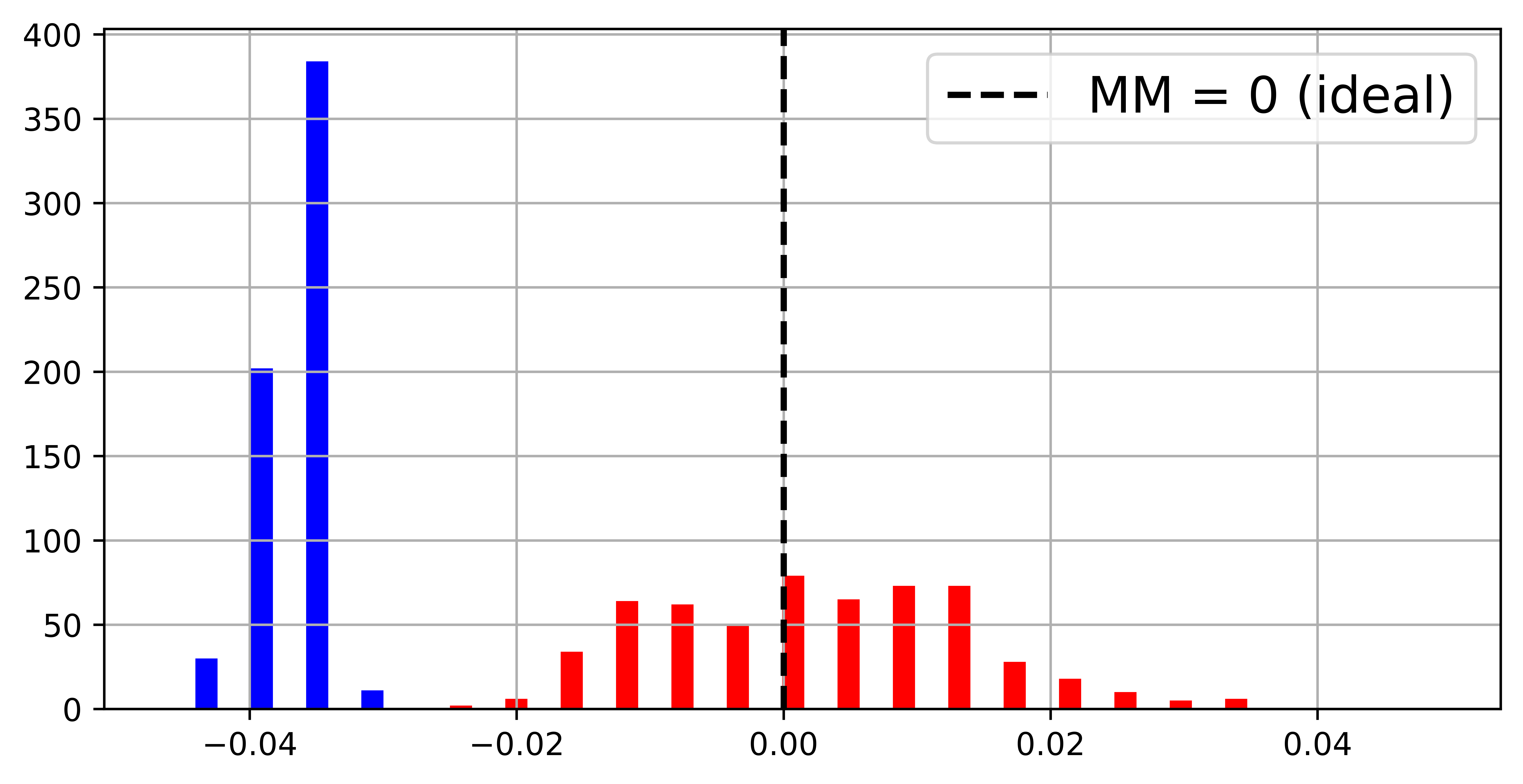}};
\node at (-3.5,-1.7) {\small 1202 partisan outlier plans};
\node at (3.5,-3.3)  {\includegraphics[height=1.15in]{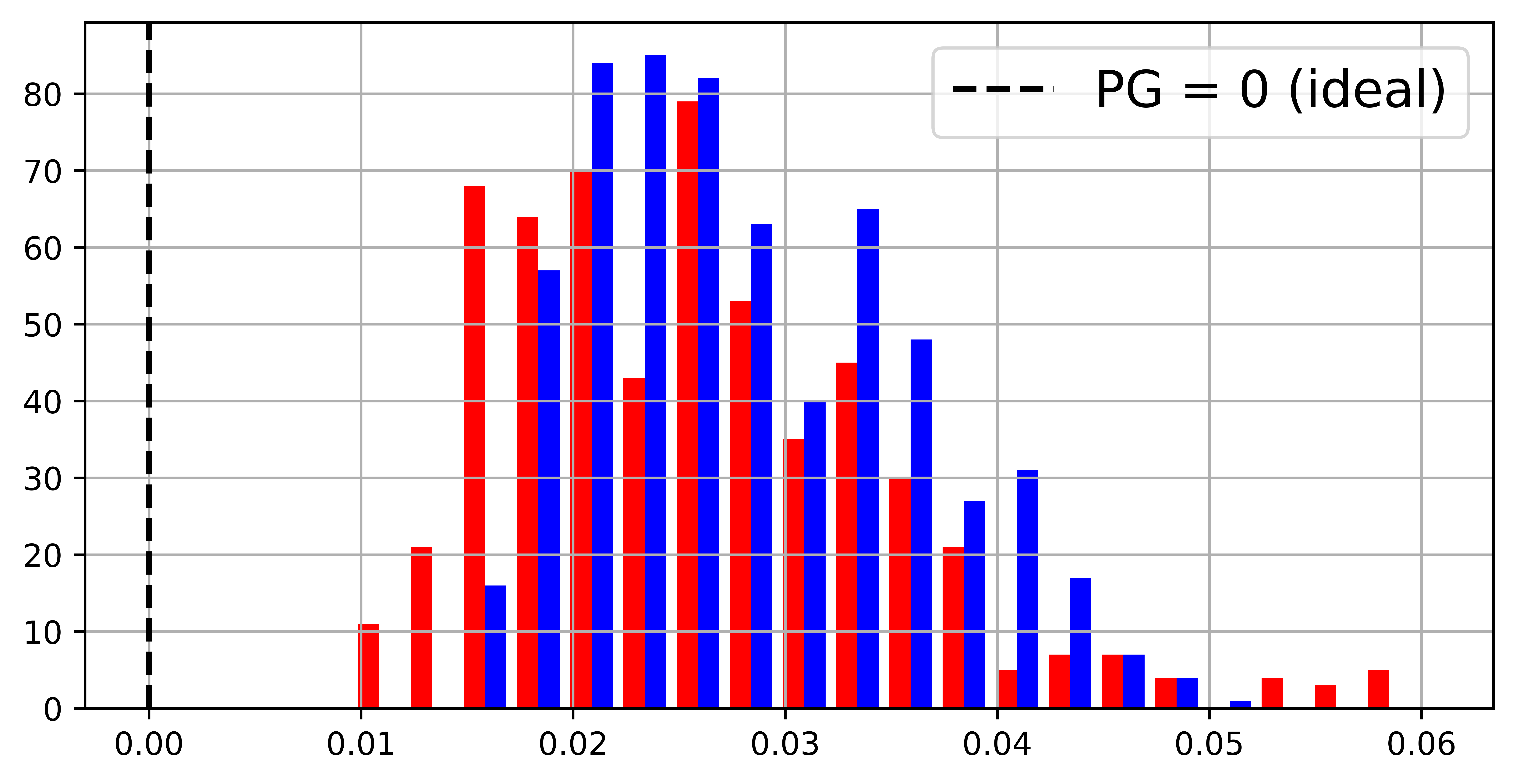}};
\node at (3.5,-1.7) {\small 1202 partisan outlier plans};
\end{tikzpicture}
\caption{Ensemble outputs for North Carolina Congressional plans with respect to SEN16 votes. Republicans received 53.02\% of the two-way vote in this election, which is marked in the plots to show the corresponding seat share. There are 1202 plans in the ensemble that are seats outliers for one party or the other; these are shown in red and blue in the top row and their relative frequency can be observed in the next two rows, which focus on plans with the best symmetry scores. The last row of the figure shows the $\MM$ and $\PG$ histograms restricted to the 1202 outlier plans flagged above. In this setting, symmetry can easily be gamed in favor of Republicans, with thousands of 11-2 plans receiving near-perfect mean-median scores.}\label{fig:NC}
\end{figure}

\subsection{Texas}
Next, we turn to Texas, creating a chain of 100,000 steps to explore the ways to divide up the 2012 Senate vote distribution.  With 36 Congressional districts, Texas has one of the highest $k$ values of any state (only California has more seats). The 2012 Senate race was won by a Republican with $\sim\! 58\%$ of the vote. 
Figure~\ref{fig:texas} shows the partisan properties in the ensemble of plans, allowing us to compare extreme symmetry scores (an ostensible indicator of partisan unfairness) to extreme seat shares (the explicit goal of partisan gerrymandering). We find no evidence of correlation or any kind of correspondence.

Over 98\% of the sampled plans give Republicans  22 to 27 seats out of 36, seen in gray in the histogram. The red bars mark the outlying plans with the most Republican seats (28 or more R seats), while the blue bars mark the  most Democratic plans (21 or fewer R seats). 
We can then study the histograms formed by the winnowed subsets of the ensemble with the best $\PG$ and $\MM$ scores, which in each case fall in the top 6\%. Note that these severely
winnowed subsets not only have a shape similar to the 
full ensemble (indicating a lack of correlation), but that there are still many partisan outlier plans even with strict symmetry standards in place. 
Plans with the extreme outcome of $\ge 28$ R seats actually occur with  {\em higher} frequency among the  $\MM\approx 0$ plans
than in the full sample---more than twice as often, in fact.  This shows rather emphatically that restricting to "good" symmetry scores is no impediment to partisan gerrymandering.

For the reverse perspective, we consider 
how  plans with extreme seat counts score
on symmetry.
The last row in Figure~\ref{fig:texas} shows only the seat outliers: blue for plans with $\le 21$  and 
red for plans with $\ge 28$ Republican seats.  
A significant number of  maximally D-favoring plans (which are also close to vote proportionality) paradoxically register as major Republican gerrymanders under the $\MM$ score, outpacing by a significant margin the most extreme R-favoring plans.
The mean-median score utterly fails at identifying partisan advantage even in an election regarded as "reasonably competitive" by the proponents of partisan symmetry. In Texas, as in Utah, it is simply false that a median higher than the mean is a flag of advantage for the point-of-view party.

In terms of the overall symmetry measured by $\PG$, extreme plans for both parties can be found with 
scores that are as good as nearly anything observed in the ensemble.  
So from this perspective as well,  neither $\MM$ nor $\PG$ signals anything with respect to political outcomes.  
Even if the proponents of symmetry standards never intended to constrain extreme seat imbalances, this runs counter to
the common expectations of anti-gerrymandering reforms in popular discourse, in legal settings, and even in much of the political science literature.

\subsection{North Carolina}
Finally, we move to a state with a much closer to 
even partisan split:  North Carolina ($k=13$ seats),
with respect to the 2016 Senate vote
($\sim \! 53\%$ Republican share).
In this case, mean-median does much better than in Texas
in terms of distinguishing the seat extremes:  Figure~\ref{fig:NC} shows consistently higher
scores for the maps with the most Republican seat
share than  the ones with the most Democratic
outcomes.  However, the extreme Republican maps still straddle the "ideal" score of $\MM=0$, and both sides can still find very extreme plans whose $\PG$ scores report that their symmetry is essentially as good as anything in the ensemble.

Overall it is fair to say that 
partisan symmetry imposes no constraint on partisan gerrymandering in North Carolina, at least for one side:  this method easily produces hundreds of maps with 10-3 outcomes (which was clearly reported in the Rucho case to be the most extreme that the legislature thought was possible) while securing nearly perfect symmetry scores, and the gerrymanderer only needs {\em one}.  
Indeed, the ensemble even finds highly partisan-symmetric maps that return an 11-2 outcome for this particular vote pattern. Four of these are shown in Figure~\ref{fig:11-2}.

\begin{figure}[ht]
    \centering
    
\begin{tikzpicture}
\begin{scope}[yshift=3cm]
\clip (-3.1,-1.3) rectangle (3.5,1.3);
\node at (0,0) {\includegraphics[width=3.5in]{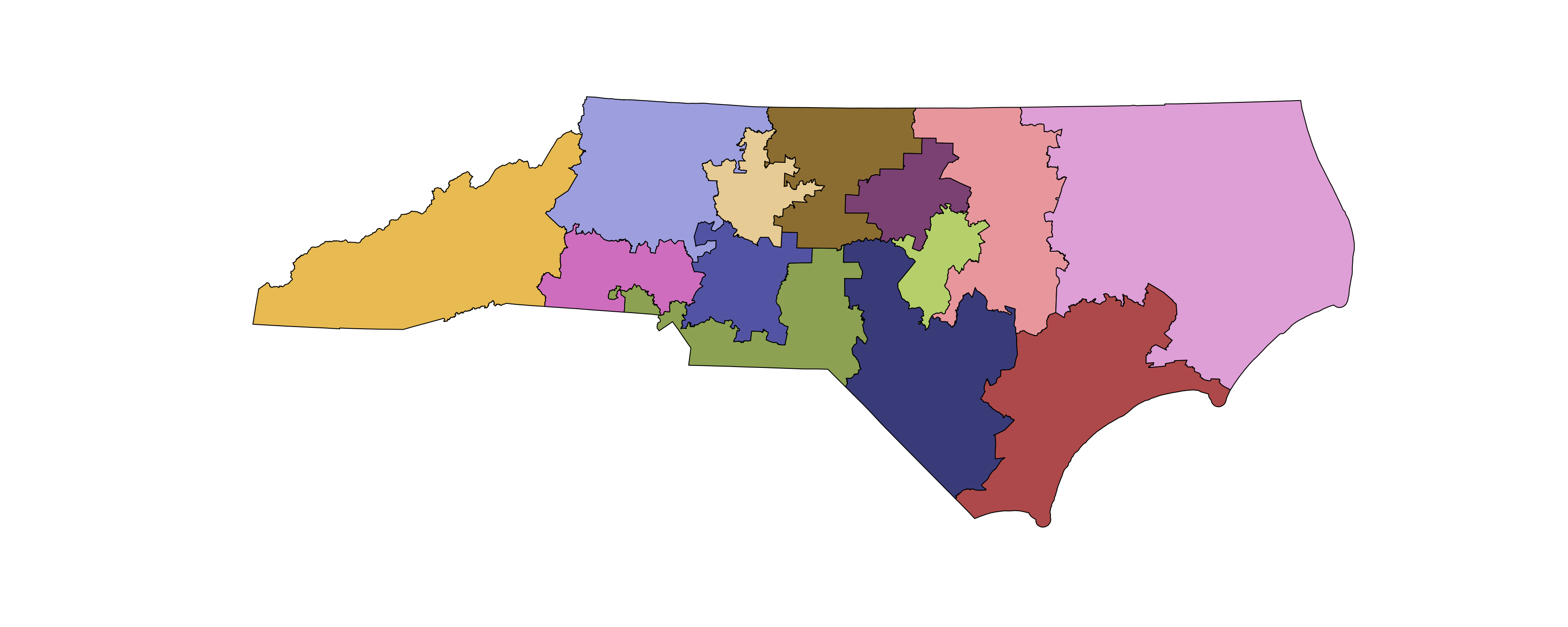}}; 
\end{scope}

\begin{scope}[xshift=8cm,yshift=3cm]
\clip (-3.1,-1.3) rectangle (3.5,1.3);
\node at (0,0) {\includegraphics[width=3.5in]{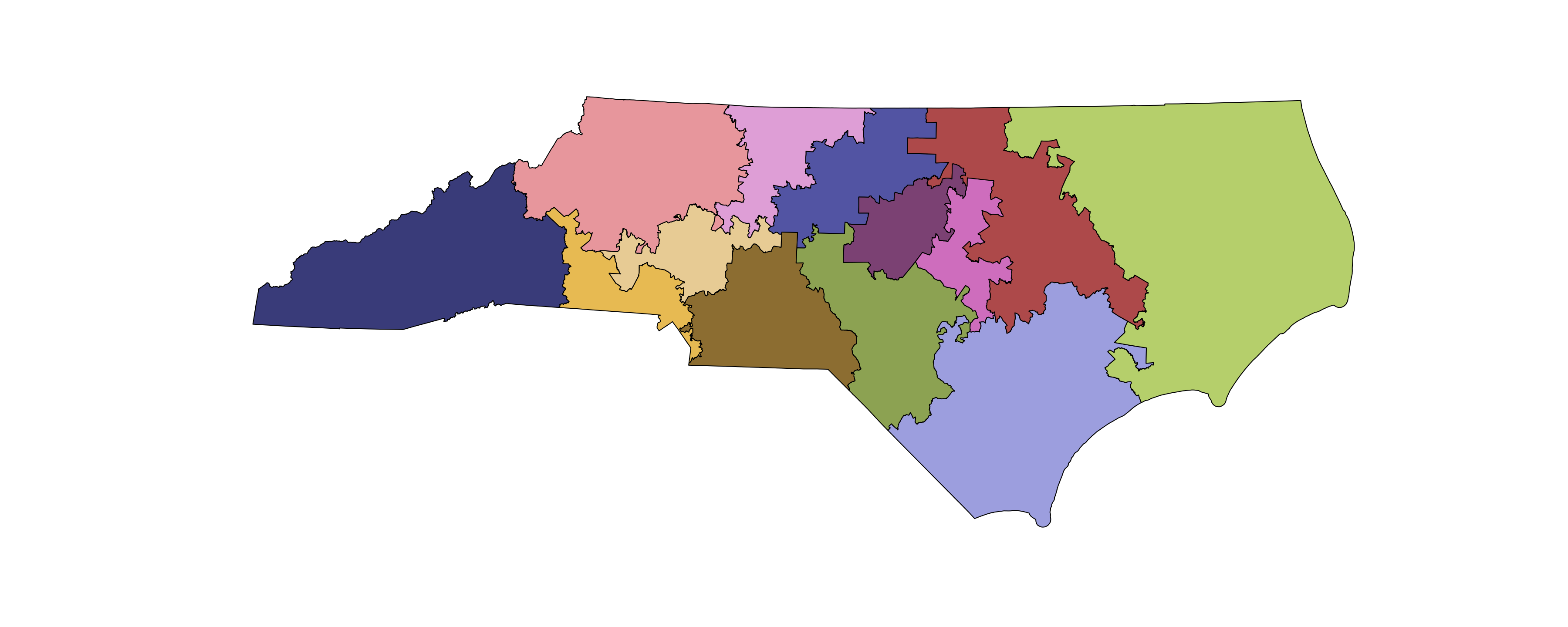}}; 
\end{scope}

\begin{scope}
\clip (-3.1,-1.3) rectangle (3.5,1.3);
\node at (0,0) {\includegraphics[width=3.5in]{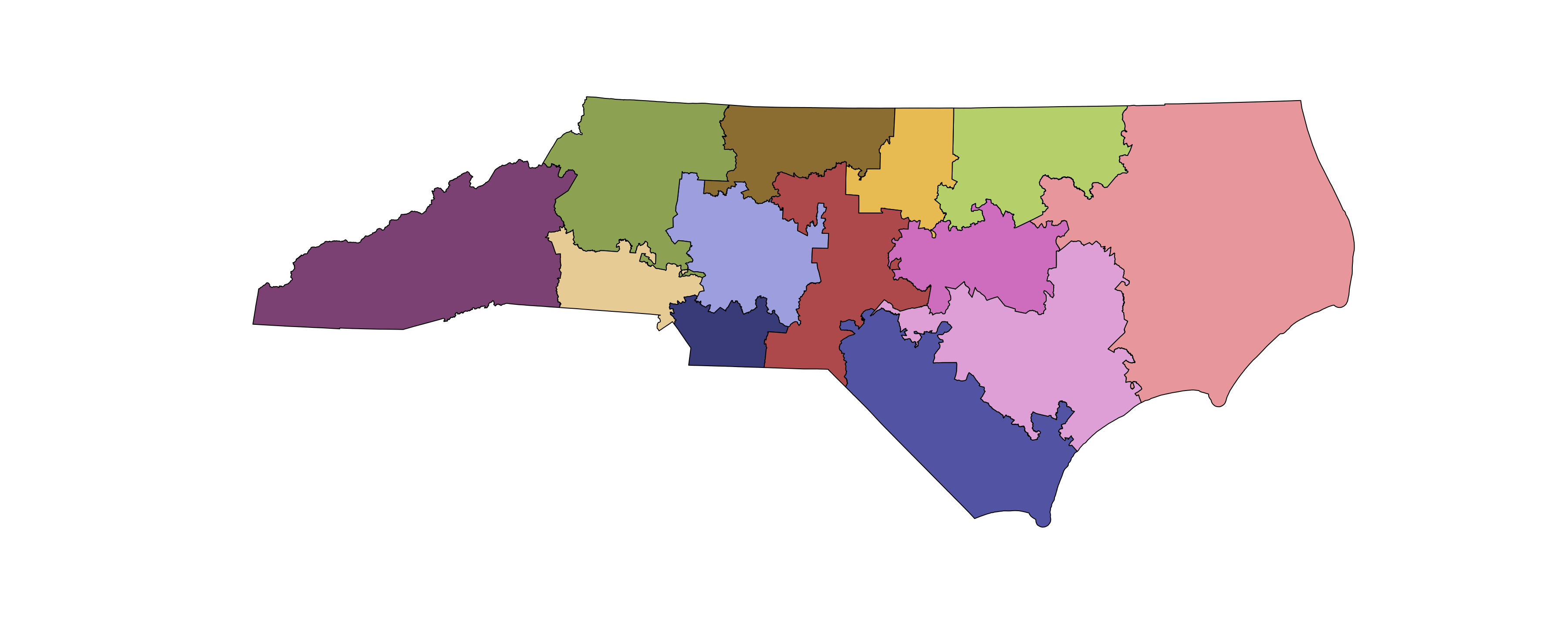}}; 
\end{scope}

\begin{scope}[xshift=8cm]
\clip (-3.1,-1.3) rectangle (3.5,1.3);
\node at (0,0) {\includegraphics[width=3.5in]{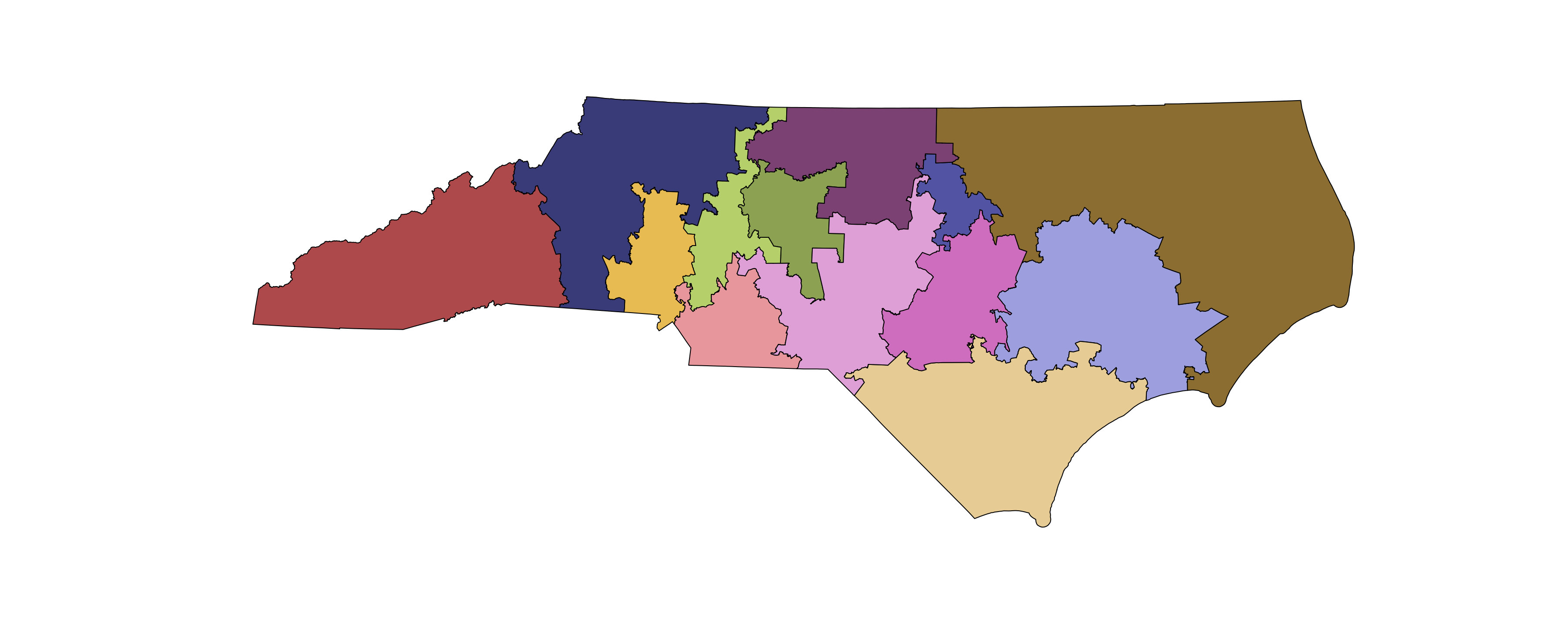}}; 
\end{scope}    
\end{tikzpicture}    
    
\caption{Algorithmic methods are mainly used here for example generation.  Each of these 13-district plans comes out 11R-2D with respect to the SEN16 voting data, while having nearly perfect partisan symmetry.  These maps have $\PG$ scores of 0.0096, 0.0099, 0.0107, and 0.0115, respectively, all in the best 2\% of the ensemble.  This figure also illustrates the diversity of districting plans achieved by this Markov chain method.}
    \label{fig:11-2}
\end{figure}

To sum up, we give a recipe for how to hide your partisan gerrymander from detection by symmetry scores, even in a 53-47 state.  To begin, leverage differential turnout.  A recent analysis of 2014-16 Congressional voting showed that several states have voter turnout that is 25 or even 40\% higher in the districts won by one party than the other \citep{Veomett}.
This can easily push the average vote share in the districts two points higher than the statewide share.  Now give your party a 55\% majority in many districts, and let the others be landslides, being sure to arrange the vote shares symmetrically around 55\%.  This secures sterling symmetry scores and a windfall of seats for your side without even risking any close contests.\footnote{Example:  Suppose Party A has 53\% of the statewide vote in a ten-district state, and suppose the average turnout in A-won districts is 80\% that of B-won districts.  Then Party A can get a 7A-3B outcome with all safe seats by dividing up the votes as $\V=(0.26, 0.26, 0.26, 0.55, 0.55, 0.55, 0.55, 0.84, 0.84, 0.84)$, while acing every symmetry test.}

\section{Conclusion}
In this piece, we have characterized the partisan symmetry standard from \cite{KKR} mathematically:  it turns out to amount simply to a prescription for the arrangement of vote totals across districts (Theorem~\ref{thm:equiv}, Partisan Symmetry Characterization).\footnote{One possible response is to try to preserve the partisan symmetry standard but abandon linear uniform partisan swing in favor of a different way of drawing seats-votes curves.  We have discussed alternatives above in several places. We reiterate that noising the seats-votes curve will only change the precision of the characterization. We also note that no change to the swing assumption impacts the findings on the mean-median score, which is defined as the difference between the mean district vote share and the median. }     
We follow this with examples of realistic conditions under which the adoption of strict symmetry standards not only  (a) fails to prevent extreme partisan outcomes but even (b) can lock in  unforeseen consequences on these partisan outcomes.
Finally, again under realistic conditions, signed partisan symmetry metrics  (c) can plainly mis-identify which party is advantaged by a plan.%
\footnote{While this is beyond the scope of the current paper, there is also every reason to believe that partisan symmetry metrics can (d) give answers that depend unpredictably on which vote pattern is used to assess them: endogenous (imputing uncontested races) or exogenous?  Senate race or attorney general? Most single-score indicators have this problem. This may give us a reason to prefer modeling approaches that incorporate multiple races, but those approaches come with a host of modeling decisions that make them impractical for real-world use.}

None of these findings gives a theoretical reason for rejecting partisan symmetry as a definition of fairness.  A believer in symmetry-as-fairness can certainly coherently hold that symmetry standards do not {\em aim} to constrain partisan outcomes, but merely to reinforce the legitimacy of district-based democracy by reassuring the voting public that the tables can yet turn in the future.
This view casts aside, or holds irrelevant, the standard definition of a partisan gerrymander as a plan designed to maximize the seats for a party.
With this reasoning, we should not worry that Democrats in Utah may for now be locked out of Congressional representation {\em by the symmetry standard itself};
this is still fair because Democrats would enjoy a similar advantage of their own if election patterns were to linearly swing by 40 percentage points in their favor.   

For those who do want to constrain the most extreme partisan outcomes that line-drawing can secure, 
these investigations should serve as a strong caution regarding the use of partisan symmetry metrics, whether in the plan adoption stage or in plan evaluation after subsequent elections have been conducted.

\newpage 
If symmetry metrics measured something that was obviously of inherent value in the healthy functioning of representative democracy, then we might reasonably choose to live with the consequences of the definition, no matter the partisan outcomes.
However, the Characterization Theorem shows that a putatively perfect symmetry score is nothing more and nothing less than a requirement that the vote shares $v_i$ in the districts be arranged symmetrically on the number line (see Figure~\ref{fig:eyeball}).  
Someone who wishes to assert that partisan symmetry is {\em really about} some principle---majority rule, responsiveness, equality of opportunity, etc---would have to explain why that principle is captured by the simple arithmetic of vote share spacing.  
With this framing, it is more difficult to argue that symmetry captures any essential ingredient of civic fairness.



\vspace{.6in}

\section*{Funding, Acknowledgments, Data Availability}
The 2019 Voting Rights Data Institute was generously supported by the Prof. Amar J. Bose Research Grant at MIT and the Jonathan M. Tisch College of Civic Life at Tufts.  MD was partially supported by NSF DMS-1255442.

MD is deeply grateful to Gary King for illuminating conversations about the partisan symmetry standard.  All authors thank the other participants of the  Voting Rights Data Institute, especially Brian Morris, Michelle Jones, and Cleveland Waddell, for stimulating discussions.

The replication materials for this paper can be found at 
\citep{PSymm}. We also make use of the code and data in \citep{gerrychain,MGGG-states}.

\printbibliography{}

\newpage
\appendix

\section{Supplement: Proof of characterization theorem}\label{sec:equiv-proof}

We briefly recall the needed notation from above: vote share vector $\V$ with $i$th coordinate 
$v_i$; gap vector $\D$ with $\delta_i=v_{i+1}-v_i$;
and jump vector $\J$ with $j_i=\frac 12 + \vbar - v_{k+1-i}$, where $\vbar$ is the mean of the $v_i$.
These expressions define $\J,\D$ in terms of $\V$;
neither $\J$ nor $\D$ completely determines $\V$ because they are invariant under translation of the entries of $\V$, but one additional 
datum (such as $v_1$ or $\vbar$)
suffices, with $\J$ or $\D$, 
to fix the associated $\V$.
In this section, we begin by 
expressing $\PG$ in terms of the 
jumps $\J$, then giving equivalent conditions for 
$\PG=0$ in terms of $\J$, $\D$, or $\V$.

As outlined above, \PG measures the area between the seats-votes curve 
$\gamma$ and its  reflection. 
The shape of the region between those
curves depends directly on the points $\J= (j_1, j_2, \dots, j_k)$, since each $j_i$ is the $x$ value of a vertical jump in the curve and the $1-j_i$ are the values
of the jumps in the reflection.  
But looking at Figure~\ref{fig:sv-curve}
makes it clear that it is complicated to 
decompose the integral into vertical
rectangles in the style of a Riemann sum,
because the $\{j_i\}$ and the $\{1-j_{k-i}\}$
do not always alternate.  
Fortunately, it is always easy to 
decompose the picture into horizontal
rectangles (analogous to a Lebesgue
integral), where it is now clear which 
red and blue corners to pair as the 
seat share changes from $i/k$ to $(i+1)/k$.
The curve contains the points 
$(j_i, \frac{i-1}k)$, $(j_i,\frac{i}k)$
as well as
$(j_{k+1-i}, \frac{k-i}k)$, 
$(j_{k+1-i}, \frac{k-i+1}k)$.
The rotated curve therefore contains
the points 
$(1-j_{k+1-i}, \frac {i-1}k)$ and 
$(1-j_{k+1-i}, \frac ik)$, which 
means that the $i$th rectangle has 
height $1/k$ and width 
$\bigl| j_i+j_{k+1-i}-1 \bigr|$.
Summing over these rectangles gives us the expression 
$$\PG = \frac 1k \sum_{i=1}^k \bigl|j_i + j_{k+1-i} - 1\bigr|.$$


Recall that the set of vote share vectors
$\mathcal V$ is the cone in the 
vector space $\R^k$ given by 
the condition that the $v_i$
are in non-decreasing order in 
$[0,1]$.
The $\J$ vector is simply 
the $\V$ vector reversed
and re-centered at $1/2$ rather
than $\vbar$.
The only condition on the gap
vector $\D$ is that its entries are non-negative and
sum to at most one.
Putting these observations together
we may define the set of 
achievable $\V,\D,\J$ respectively
as
$$\mathcal V=\left\{
(v_1,\dots,v_k) :
0\le v_1\le\dots\le v_k\le 1
\right\},$$
$$ \mathcal D
=\left\{(\delta_1,\dots,
\delta_{k-1}): \delta_i\ge 0 \ \forall i, \quad 
\sum \delta_i\le 1
\right\},$$
$$\mathcal J=\left\{(j_1,\dots,j_k):
0\le j_1 \le \dots \le j_k \le 1,
\quad 
\sum j_i=\frac k2\right\}.$$
The condition on $\J$ is of  interest because it exactly identifies the possible seats--votes curves 
$\Gamma = \{\gamma_\V:\V\in\mathcal V \}$.
(That is, not just any step function is realizable as a valid  $\gamma$.)

Now we can prove the Characterization theorem.
\setcounter{theorem}{2}
\begin{theorem}
    Given $k$ districts with vote shares $\V$, jump vector $\J$, and gap vector $\D$, the following are equivalent:
    \begin{align}
    \setcounter{equation}{0}
        &\PG(\V) = 0 \label{1}\\
        &j_i + j_{k+1-i} - 1 = 0 \qquad  \forall  i \label{2} \\
        &\frac 12 \left(v_i + v_{k+1-i}\right) = \vbar \qquad \forall  i  \label{3}\\
        &\frac 12 \left(v_i + v_{k+1-i}\right) = \vmed \qquad \forall  i  \label{4}\\
        &\delta_i = \delta_{k-i} \qquad \forall  i \label{5}
    \end{align}
\end{theorem}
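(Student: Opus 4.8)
The plan is to treat condition~\eqref{3}---symmetry of the vote shares about their mean---as a hub and prove each of the other four conditions equivalent to it, which confines the index bookkeeping to one place at a time. The equivalence \eqref{1}~$\Leftrightarrow$~\eqref{2} I would dispatch first and directly: the formula $\PG = \frac1k\sum_{i=1}^k \bigl|j_i+j_{k+1-i}-1\bigr|$ derived above is a sum of nonnegative terms, so it vanishes if and only if every summand vanishes, which is exactly \eqref{2}.

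For \eqref{2}~$\Leftrightarrow$~\eqref{3} I would substitute the defining relation $j_i=\frac12+\vbar-v_{k+1-i}$. The only subtlety is the double index reversal: sending $i\mapsto k+1-i$ gives $j_{k+1-i}=\frac12+\vbar-v_i$, so that $j_i+j_{k+1-i}-1 = 2\vbar-v_i-v_{k+1-i}$. Setting this equal to zero is precisely the mean-vote condition \eqref{3}, and since the manipulation is reversible the biconditional follows at once.

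The step \eqref{3}~$\Leftrightarrow$~\eqref{4} reduces to the single fact that either condition forces $\vbar=\vmed$, after which the two displayed equations are literally identical. In one direction, summing \eqref{4} over $i$ and using $\sum_i v_i=\sum_i v_{k+1-i}=k\vbar$ gives $k\vmed=k\vbar$; in the other, evaluating \eqref{3} at the central index recovers $\vmed$ under the non-decreasing convention---taking $i=(k+1)/2$ when $k$ is odd, and $i=k/2$ (so that $v_{k+1-i}=v_{k/2+1}$) when $k$ is even. This even/odd split is the one place I would slow down and check against the stated median convention.

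Finally, for \eqref{3}~$\Leftrightarrow$~\eqref{5} I would examine $f(i):=v_i+v_{k+1-i}$ and record the telescoping identity $f(i+1)-f(i)=\delta_i-\delta_{k-i}$. Thus the palindromic gap condition \eqref{5} says exactly that $f$ is constant, and summing $f$ over all $i$ pins that constant down to $2\vbar$, which is \eqref{3}. I expect this last equivalence to be the main obstacle---not because it is deep, but because the natural center of a palindromic gap sequence is the range midpoint $\frac12(v_1+v_k)$, and one must run the averaging argument to see that it coincides with $\vbar$; throughout, the real work is careful reindexing rather than any genuine difficulty.
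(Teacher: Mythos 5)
Your proof is correct and follows essentially the same route as the paper's: the identity $j_i+j_{k+1-i}-1=2\vbar-v_i-v_{k+1-i}$ turns the nonnegative-sum formula for $\PG$ into the chain \eqref{1}$\iff$\eqref{2}$\iff$\eqref{3}, and the central-index/averaging argument handles \eqref{3}$\iff$\eqref{4} exactly as in the Supplement. The only difference is at the final link, where the paper simply asserts that symmetric gaps are equivalent to symmetry of the $v_i$ about $\vmed$, whereas your telescoping identity $f(i+1)-f(i)=\delta_i-\delta_{k-i}$ together with $\sum_i f(i)=2k\vbar$ makes that step explicit---a worthwhile bit of added rigor, but not a different argument.
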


\begin{proof}
The condition that $\PG(\V) = 0$ has been
rewritten in terms of $\J$ above, and  converting back 
to the $v_i$ we get
$$\frac{1}{k}\sum_{i=1}^k |j_i + j_{k+1-i} - 1| = \frac 2k \sum_{i=1}^k \left| \frac{v_i+v_{k+1-i}}2 - \vbar\right|=0,$$
which immediately  gives  \eqref{1} $\iff$ \eqref{2} $\iff$ \eqref{3} since a sum of nonnegative terms is 
zero if and only if each term is zero. 
To see \eqref{3} $\iff$ \eqref{4}, just consider
$i=\ceil{\frac k2}$ in \eqref{3} to obtain $\vmed=\vbar$; 
in the other direction, average both sides over $i$ in 
\eqref{4}
to obtain $\vbar=\vmed$.  
Finally, the symmetric gaps condition \eqref{5} is clearly equivalent to the symmetry of the values
of $\V$ about the center $\vmed$,
which is \eqref{4}.  
\end{proof}

\section{Supplement: Bounding partisan Gini in terms of mean-median}\label{sec:k34} \label{sec:PGMM}

Recall that the mean-median score
$\MM$ is a signed
score that is supposed to 
identify which party has a structural advantage, and by what
amount.  On the other hand, the 
partisan Gini 
$\PG$ is a non-negative score
that simply quantifies the
failure of symmetry, interpreted
as a magnitude of unfairness.  
We easily see that  $\PG=0\implies \MM=0$
by comparing 
\eqref{3} and \eqref{4} in Theorem~\ref{thm:equiv}.
In this section we strengthen that to a bound from below
that is sharp in low dimension.

Let us define 
$\discrep(i)=\frac{v_i+v_{k+1-i}}2-
\vbar$, measuring the difference
between the average of a pair 
of vote shares from the average
of all the vote shares.
This gives
$$\PG=\frac 1k \sum_{i=1}^k |2\vbar
-v_i-v_{k+1-i}|=\frac 2k 
\sum_{i=1}^k |\discrep(i)|.$$

Note that $\discrep(\ceil{\frac k2})=\MM$, as
observed above, 
and that $\sum_{i=1}^k \discrep(i)=0$ by definition
of $\vbar$.

\setcounter{theorem}{6}
\begin{theorem}The partisan Gini score satisfies
$\begin{cases}
\PG\ge \frac 4k |\MM|,& k ~{\rm odd}\\
\PG\ge \frac 8k |\MM|,& k ~{\rm even},
\end{cases}$
with equality when $k=3,4$.
\end{theorem}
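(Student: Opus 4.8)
The plan is to work entirely with the discrepancy quantities $\discrep(i)=\frac{v_i+v_{k+1-i}}2-\vbar$ already introduced, using the identity $\PG=\frac 2k\sum_{i=1}^k|\discrep(i)|$ together with the two facts recorded above: $\sum_{i=1}^k\discrep(i)=0$ and $\discrep(\ceil{k/2})=\MM$. The first thing I would record is a symmetry property that does all the work: since the reflection $i\mapsto k+1-i$ fixes the unordered pair $\{v_i,v_{k+1-i}\}$, we get $\discrep(i)=\discrep(k+1-i)$ for every $i$. Thus the discrepancies are palindromic, so both the sum defining $\PG$ and the zero-sum constraint fold neatly in half, and the constraint becomes a relation among the first-half discrepancies alone.

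From here I would split into the two parities. For $k=2m+1$ odd, folding the constraint gives $2\sum_{i=1}^m\discrep(i)+\discrep(m+1)=0$, and since the unpaired middle term is $\discrep(m+1)=\MM$ this reads $\sum_{i=1}^m\discrep(i)=-\tfrac12\MM$; folding $\PG$ gives $\PG=\frac 2k\bigl(2\sum_{i=1}^m|\discrep(i)|+|\MM|\bigr)$, and the triangle inequality $\sum_{i=1}^m|\discrep(i)|\ge\bigl|\sum_{i=1}^m\discrep(i)\bigr|=\tfrac12|\MM|$ then yields $\PG\ge\frac 4k|\MM|$. For $k=2m$ even, the same folding gives $\sum_{i=1}^m\discrep(i)=0$ and $\PG=\frac 4k\sum_{i=1}^m|\discrep(i)|$; here the $\MM$ term sits at the paired index $m$ (with $\discrep(m)=\discrep(m+1)=\MM$), so isolating it leaves $\sum_{i=1}^{m-1}\discrep(i)=-\MM$, and the triangle inequality on those $m-1$ terms gives $\sum_{i=1}^m|\discrep(i)|=|\MM|+\sum_{i=1}^{m-1}|\discrep(i)|\ge 2|\MM|$, hence $\PG\ge\frac 8k|\MM|$.

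For the sharpness claim, equality in each bound is exactly the condition that the triangle inequality used above is tight, i.e., that the first-half discrepancies other than the middle term all carry a single sign. When $k=3$ (so $m=1$) the relevant sum $\sum_{i=1}^m\discrep(i)$ has the single term $\discrep(1)$, and when $k=4$ (so $m=2$) the sum $\sum_{i=1}^{m-1}\discrep(i)$ likewise has the single term $\discrep(1)$; in both cases the triangle inequality is an equality for \emph{every} admissible vote vector. Hence the bounds hold with equality identically in these dimensions, recovering $\PG=\frac 43|\MM|$ for $k=3$ and $\PG=2|\MM|$ for $k=4$.

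The only real obstacle is bookkeeping: one must keep straight that the $\MM$ term is the lone \emph{unpaired} middle discrepancy when $k$ is odd, but is a \emph{paired} discrepancy (counted twice, since $\discrep(m)=\discrep(m+1)$) when $k$ is even. This single distinction is precisely what produces the factor-of-two difference between the two cases ($\frac 4k$ versus $\frac 8k$). Once the palindrome symmetry $\discrep(i)=\discrep(k+1-i)$ is in hand, every remaining step is a routine application of the triangle inequality.
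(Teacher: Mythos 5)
Your proof is correct and follows essentially the same route as the paper's: isolate the middle discrepancy term(s) equal to $\MM$, use the zero-sum identity $\sum_i \discrep(i)=0$ to see that the remaining discrepancies sum to $-\MM$ (odd case) or $-2\MM$ (even case), and apply the triangle inequality, with equality for $k=3,4$ because the residual sum collapses to a single term $\discrep(1)$. Your explicit use of the palindrome symmetry $\discrep(i)=\discrep(k+1-i)$ to fold the sums in half is a tidy presentational refinement, but the underlying argument is the same.
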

\begin{proof}
First suppose $k$ is odd, say $k=2m+1$.
Then $\discrep(m)=v_m-\vbar=\MM$, so 
$\sum_{i\neq m}\discrep(i)=-\MM$.
We have 
\begin{align*}\PG=\frac 2k\sum_{i=1}^k |\discrep(i)|=
\frac 2k \left(|\discrep(m)|+\sum_{i\neq m} |\discrep(i)|\right)&\ge
\frac 2k \left(|\discrep(m)|+\bigl|\sum_{i\neq m} \discrep(i)\bigr|\right)\\
&=\frac 2k\left(|\MM|+|-\MM|\right)=\frac 4k |\MM|.
\end{align*}

The argument for even $k=2m$ is very 
similar, except that 
$\discrep(m)=\discrep(m+1)=
\frac{v_m+v_{m+1}}2-\vbar=\MM$.
So now those terms contribute $2|\MM|$ to 
the sum and the remaining terms contribute 
at least $2|-\MM|$, for a bound of 
$\PG\ge \frac 8k|\MM|$.
That completes the proof of the inequalities.

For $k=3$ or $k=4$, the term $\sum_{i\neq m} |\discrep(i)|$
is just $2|\discrep(1)|$, making the inequality into an equality. 
\end{proof}

By a dimension 
count, it is easy to 
see that $\PG$ is not simply
 a function of $\MM$ for 
$k\ge 5$. 
A direct calculation confirms this, and shows that $\MM$ is not simply a function of $\PG$ either.
Let 
$$\V=(.2,.3,.4,.5,.7), \qquad
\V'=(.2,.31,.39,.5,.7), \qquad
\V"=(.19,.31,.4,.5,.7),$$
giving
$\PG(\V)=\PG(\V')$ while
$\MM(\V)\neq\MM(\V')$.
On the other hand,
$\MM(\V)=\MM(\V")$ while
$\PG(\V)\neq\PG(\V")$.


\end{document}